\newtheorem{theorem}{Theorem}
\newtheorem{corollary}{Corollary}
\newtheorem{lemma}{Lemma}
\newtheorem{proposition}{Proposition}
\newtheorem{remark}{Remark}
\newenvironment{proof}[1][Proof]{\textbf{#1.} }{\ \rule{0.5em}{0.5em}}
\begin{document}

\vspace{0.8pc}
\centerline{\large\bf SMOOTHED EXTREME VALUE ESTIMATORS}
\vspace{2pt}
\centerline{\large\bf OF NON-UNIFORM POINT PROCESSES BOUNDARIES}
\vspace{2pt}
\centerline{\large\bf WITH APPLICATION TO STAR-SHAPED SUPPORTS ESTIMATION}
\vspace{2pt}
\vspace{.4cm}
\centerline{St\'ephane Girard$^{(1)}$ and Ludovic Menneteau$^{(2)}$}
\vspace{.4cm}
\centerline{\it $^{(1)}$ INRIA Rh\^one-Alpes, $^{(2)}$ Universit\'e Montpellier 2}
\vspace{.55cm}

\begin{quotation}
\noindent {\it Abstract:}
We address the problem of estimating the edge of a bounded set 
in $\mathbb{R}^{d}$ given a random set of points drawn from the interior.
Our method is based on a transformation of estimators dedicated
to uniform point processes and obtained by smoothing some
of its bias corrected extreme points.  
An application to the estimation of star-shaped
supports is presented.
\par

\vspace{9pt}
\noindent {\it Key words and phrases:}
Functional estimators, extreme values, point process, shape estimation.  
\par

\noindent{\it AMS Subject Classification:} Primary 62G32; Secondary 62M30, 62G05, 62G07.

\end{quotation}\par

\section{Introduction}

We address the problem of estimating a bounded set $S$ of $\mathbb{R}^{d}$ given
a finite random set $S_n$ of points drawn from the interior. This kind of problem
arises in various frameworks such as
classification (Hardy and Rasson~(1982)),
image processing (Korostelev and Tsybakov~(1993))
 or econometrics problems (Deprins~(1984)).
A lot of different solutions were proposed
since Geffroy~(1964) and Renyi and Sulanke~(1963)
 depending on the properties of
the observed random set $S_n$ and of the unknown set $S$.
Up to our knowledge, the set valued estimators of~Chevalier~(1976),
Gensbittel~(1979) and of Devroye and Wise~(1980)~are the
more general in the sense that they require little assumptions
on $S_n$ and $S$.  
Recently (Girard and Menneteau~(2005), Menneteau~(2007)),
estimators have been introduced for estimating supports writing
$$
S=\{(x,y)\in E\times \mathbb{R},0\leq y\leq f(x)\},
$$
where $f$ is an unknown function and $E$ is a given subset of $\mathbb{R}^{d-1}$.
Thus, the estimation of $S$ reduces to the estimation of the function $f$. 
These methods assume that the random set $S_n$ is obtained
from a point process with mean measure independent from $y$.
In this paper, we propose an extension of the estimators in order
to overcome this limitation. In section~\ref{estima}, the
new family of estimators is introduced. Section~\ref{asymp} is 
devoted to their asymptotic properties. We state a multivariate central
limit theorem as well as a moderate deviations principle.
 These results are applied in section~\ref{secpolar} to the estimation
of star-shaped supports.  Proofs
are collected in section~\ref{secproofs}.

\section{Boundary estimators}
\label{estima}

Let $(E,\mathcal{E},\nu)$ be a probability space, with $E\subset \mathbb{R}^{d-1}$
and where $\nu$ is absolutely continuous with respect to $\lambda$ the Lebesgue measure on $\mathbb{R}^{d-1}$.  
Let $f:(E,\mathcal{E})\rightarrow (\mathbb{R}^+,\mathcal{B}(\mathbb{R}^+))$
be a measurable function, where $\mathcal{B}\left(  \mathbb{R}\right)  $ is the
Borel $\sigma$-algebra on $\mathbb{R}$. Consider the set
\begin{equation}
\label{eqsupport}
S=\{(x,y)\in E\times \mathbb{R},0\leq y\leq f(x)\}.
\end{equation}
Our aim is to estimate $S$ from
a sequence of $S$-valued random vectors
$$
S_n=\{(X_{n,i},Y_{n,i}),\;1\leq i\leq N_{n}(S)\},
$$
with associated counting process 
$$
N_{n}=\left\{  N_{n}\left(  D\right)  :D\in\mathcal{E}\otimes\mathcal{B}%
\left(  \mathbb{R}^+\right)  \right\}  ,\;\;n\geq 1,
$$
of mean measure
\begin{equation}
\label{eqmesure}
n\;c\; \varphi\left(  x,y\right)  \;\mathbf{1}_{S}\left(  x,y\right)
\;\nu(dx)\;dy,
\end{equation}
where $\varphi: S\to\mathbb{R}^+$ is a given non negative function,
and $c$ is an unknown positive parameter.  
In the following, some additional hypothesis are introduced on $\varphi$. 
Two cases are considered below:\\
(P) $N_n$ is a Poisson point process,\\
(E) $N_n$ is an ($n$-sample) empirical point process.\\
In view of~(\ref{eqsupport}), it appears that 
the estimation of the support~$S$ is equivalent to the estimation of the frontier~$f$.
We refer to section~\ref{secpolar} for an illustrative example
of this framework. It is
shown that the estimation of star-shaped supports of homogeneous
point processes reduces to the estimation of 
supports~(\ref{eqsupport}) associated to point processes with
mean measure~(\ref{eqmesure}).

\noindent The estimators proposed in this paper are based on 
a measurable partition of $E$,\\
$\{I_{n,r}:\;1\leq r\leq k_{n}\}$, with
$k_{n}\uparrow\infty$.  
For all $1\leq r\leq k_{n}$, we note 
\[
D_{n,r}=\{(x,y):x\in I_{n,r},\;0\leq y\leq f(x)\}
\]
the cell of $S$ built on $I_{n,r}$ and $N_{n,r}=N_{n}(D_{n,r})$. 
Let us introduce the conditional quantile transformation
\[
\Phi_{x}:y\in\mathbb{R}^{+}\mapsto\int_{0}^{y}\varphi\left(  x,t\right)  dt \in \mathbb{R}^{+}
\]
and the extreme points
\[
(X_{n,r}^{\ast},Y_{n,r}^{\ast})=\underset{(X_{n,i},Y_{n,i})\in D_{n,r}}{\arg\max}
\Phi_{X_{n,i}}\left( Y_{n,i}\right)  ,
\]
if $N_{n,r}\neq0$ and $(X_{n,r}^{\ast},Y_{n,r}^{\ast})=(0,0)$ otherwise.
In the following, the convention $0\times\infty=0$ is adopted. 
\noindent Our estimator of $f(x)$ is:
\begin{equation}
\label{defest}
\hat{f}_{n}(x;\hat c_n)  =\Phi_{x}^{-1}\left(  \sum_{r=1}^{k_{n}}%
\nu_{n,r}\kappa_{n,r}(x)\left(   \Phi_{X_{n,r}^*}(Y_{n,r}^{\ast} )
+\frac{1}{n\hat c_n(x)\nu_{n,r}}\right) \right) ,
\end{equation}
where 
$\nu_{n,r}=\nu(I_{n,r})$,
$\kappa_{n,r}:E\rightarrow\mathbb{R}$ is a weighting function
determining the nature of the smoothing introduced in the estimator,
and $\hat c_n(x)$ is a convenient estimator of $c$.  
Some examples are provided in section~\ref{secpolar}.
\begin{remark}
\label{remone}
When $\varphi=1$, 
$\hat{f}_{n}$ is the estimator defined in Menneteau~(2007):
\begin{equation}
\label{estimennet}
\hat{f}_{n}(x;\hat c_n)  = \sum_{r=1}^{k_{n}}
\nu_{n,r}\kappa_{n,r}(x)\left( Y_{n,r}^{\ast} 
+\frac{1}{n\hat c_n(x)\nu_{n,r}}\right). 
\end{equation}
It can be seen that $Y_{n,r}^{\ast}$ is an estimator of the maximum of $f$ on
$I_{n,r}$ with negative bias. The use of the random variable
$1/(n\hat c_n(x)\nu_{n,r})$ allows to reduce this bias,
see also Girard and Menneteau~(2005) for an example.
\end{remark}
Our estimator~(\ref{defest}) can be considered as a transformation back-transformation
of~(\ref{estimennet}). The first transformation allows to obtain extreme values
$\Phi_{X_{n,r}^*}(Y_{n,r}^{\ast})$ of an homogeneous point process, while the
back-transformation, via $\Phi_x^{-1}$, gives back an estimation of the frontier
of the original non-uniform point process.
The next section is devoted to the asymptotic properties of $\hat{f}_n$.  
General conditions are imposed to the partition $(I_{n,r})$,
the functions $\kappa_{n,r}$, $\hat c_n$ and $\Phi$ to obtain
a central limit theorem and a moderate deviations principle for $\hat f_n$.  

\section {Main results}
\label{asymp}

Let us introduce some auxiliary functions, defined for all $x\in E$:
$$
g(x)=\Phi_{x}(f(x))  =\int_{0}^{f\left(  x\right) }\varphi(x,t)  dt
$$
is the frontier function of the	homogenized point process.
Let $w_{n,r}(x)=\kappa_{n,r}\left(  x\right)  /\kappa_{n}(x)$ be the
renormalized weights where we have defined 
$$
\kappa_{n}(x)=\left(  \sum_{r=1}^{k_{n}}\kappa_{n,r}^{2}(x)\right)^{1/2}.
$$
Define $\nu_{n}=\min\{\nu_{n,r},\;1\leq r\leq k_{n}\}$,
$m=\inf\{g(x), x\in E\}$ and
$M=\sup\{g(x), x\in E\}$.   
Let us also introduce the step function, defined for all $x\in E$ by
$$
g_n(x)= \sum_{r=1}^{k_{n}}\kappa_{n,r}(x)\int_{I_{n,r}} g\;d\nu.
$$
First assumptions are devoted to the function $\varphi$:\\
\noindent $(\mathrm{\Phi })$ $\varphi$ is continuous on $\stackrel{o}{S}$,
 positive almost everywhere on ${S}$, $\varphi(x,f(x))>0$ for all $x\in E$ and
$y\to\varphi(x,y)$ is left-differentiable at $y=f(x)$.
\begin{remark}
\label{remtwo}
\noindent Under assumption $(\mathrm{\Phi })$, $\varphi$ can be extended to
$E\times\mathbb{R}^+$ such that
for all $x\in E$,\\ 
\hspace*{1cm} i) $y\to\varphi(x,y)$ is continuous at $y=f(x)$,\\
\hspace*{1cm} ii) $y\to\frac{\partial{\varphi}(x,y)}{\partial y}$
is continuous at $y=f(x)$.\\
In the sequel, this kind of extensions will be still denoted by $\varphi$.
\end{remark}
Let $(\varepsilon_n)$ be a sequence of positive
real numbers such that $\varepsilon_n=1$ or $\varepsilon_n\downarrow 0$.  
The following assumptions will reveal useful to control
the asymptotic behavior of $\hat{f}_n$.  \\
$\left(  \mathrm{H.1}\right)  \;k_{n}\uparrow\infty$
and $(n\nu_{n})^{-1} \max(\log(n),\varepsilon_n^{-1})\rightarrow 0$ as $n\rightarrow\infty$.\medskip
\newline $\left(  \mathrm{H.2}\right)  \;$~$0<m\leq M<+\infty$ and
\[
\delta_{n}:=\max_{1\leq r\leq k_{n}} \nu_{n,r}
\sup_{(t,s)\in I_{n,r}^2} (g(t)-g(s))
=o\left(
{1/n}\right)  \;\mathrm{as\;}n\rightarrow\infty.
\]
There exists $F\subset E$ such that \medskip\newline $\left(  \mathrm{H.3}%
\right)  \;$For each $(x_{1},...,x_{p})\subset F$, there exists a regular
covariance matrix\\ $\Sigma_{(x_{1},\dots,x_{p})}=[\sigma(x_{i},x_{j})]_{1\leq
i,j\leq p}$ in ${}\mathbb{R}^{p}$ such that for all $1\leq i$, $j\leq p$,
\[
\sum_{r=1}^{k_{n}}w_{n,r}(x_{i})w_{n,r}(x_{j})\rightarrow\;\sigma(x_{i}%
,x_{j})\;\mathrm{as\;}n\rightarrow\infty.
\]
$\left(  \mathrm{H.4}\right)  \;$For all $x\in F$,
\[
\varepsilon_n^{-1/2}\max_{1\leq r\leq k_{n}}|w_{n,r}(x)|\rightarrow0\;\mathrm{as\;}n\rightarrow
\infty.
\]
$\left(  \mathrm{H.5}\right)  \;$For all $x\in F$,
\[
\varepsilon_n^{1/2}\left|  g_n(x)-g(x)\right|
=o\left(  {\frac{\kappa_{n}(x)}{n}}\right)  \;\mathrm{as\;}n\rightarrow\infty.
\]
$\left(  \mathrm{H.6}\right)  \;$For all $x\in F$,
\[
\varepsilon_n^{1/2}\sum_{r=1}^{k_{n}}|w_{n,r}(x)| \left(  n\delta_{n}\right)^{2}
\rightarrow0
\;\mathrm{as\;}n\rightarrow\infty.
\]
$\left(  \mathrm{H.7}\right)\;$ Either $\varphi$ is a constant function, or for all $x\in F$, 
$$
\varepsilon_n^{-1/2}\kappa_n(x)/n\to 0
\;\mathrm{as\;}n\rightarrow\infty.
$$ 

\noindent Before proceeding, let us comment on the assumptions. $\left(
\mathrm{H.1}\right)  $--$\left(  \mathrm{H.4}\right)  $~are devoted to the
control of the centered estimator. Assumption $\left(\mathrm{H.1}\right)$~imposes that
the mean number of points in each
cell goes to infinity. $\left(  \mathrm{H.2}\right)  $\ requires the
unknown function $g$ to be bounded away from 0. It also imposes that the mean
number of points in the cell $D_{n,r}$ above $m_{n,r}$ converges to 0. Note that
$\left(  \mathrm{H.1}\right)  $~and $\left(  \mathrm{H.2}\right)  $~force the
oscillation of $g$ on $I_{n,r}$ to converge uniformly to 0.
$\left(  \mathrm{H.3}
\right)  $ is devoted to the multivariate aspects of the limit theorems.
$\left(  \mathrm{H.4}\right)  $ imposes to the weight functions $\kappa
_{n,r}(x)$ in the linear combination (\ref{defest}) to be approximatively of
the same order. This is a natural condition to obtain an asymptotic Gaussian
behavior. 
Assumptions $\left(  \mathrm{H.5}\right)  $~and $\left(  \mathrm{H.6}\right)
$~are devoted to the control of the bias term $\mathbb{E}(\widehat{f}%
_{n}\left(  x\right)  )-f(x)$. They prevent it to be too important with
respect to the variance of the estimate (which will reveal to be of order
$\kappa_{n}(x)/n$). 
Finally,
$\left(  \mathrm{H.6}\right)  $~can be looked at as a stronger version of
$\left(  \mathrm{H.2}\right)  $.

The last assumptions control the estimation of $c$.\\
$\left(  \mathrm{C.1}\right)  \;$For all $x\in F$, and any $\eta>0$
\[
\mathop{\lim\sup}_{n\to\infty}\varepsilon_n \log P\left(\varepsilon_n^{1/2}\left|
\sum_{r=1}^{k_{n}}w_{n,r}(x)\right|\left|\hat c_n(x)^{-1}-c^{-1}\right|
\geq\eta\right)
=-\infty.
\]
$\left(  \mathrm{C.2}\right)  \;$For all $x\in F$, and any $\eta>0$
\[
\mathop{\lim\sup}_{n\to\infty}\varepsilon_n \log P\left(\left|
\hat c_n(x)-c\right|\geq \eta \right) =-\infty.
\]
Condition (C.1) imposes the speed of convergence of the estimator $\hat c_n$
towards the unknown parameter $c$ in order to cancel the bias term $-1/(nc)$, see
Remark~\ref{remone}. Assumption (C.2) allows to replace $c$ by
its estimator in the asymptotic variance of $\hat f_n$.
\noindent Our first results state the multivariate central limit
theorem for $\hat f_n$.  

\begin{theorem}
\label{TCLvarphi} Let $\varepsilon_n=1$ and suppose $(\Phi)$, $\left(
\mathrm{H.1}\right)  $-$\left(  \mathrm{H.7}\right)$ are verified.   
Let $\hat c_{1,n}$ and $\hat c_{2,n}$ verifying respectively $(\mathrm{C.1})$
and $(\mathrm{C.2})$. 
For all $\left(
x_{1},...,x_{p}\right)  \subset F,$%
\[
\left\{  \frac{n\hat c_{2,n}(x_j)\varphi\left(  x_{j},f\left(  x_{j};\hat c_{1,n}\right)  \right)  }%
{\kappa_{n}\left(  x_{j}\right)  }\left(  \hat{f}_{n}\left(  x_{j};\hat c_{1,n}\right)
-f\left(  x_{j}\right)  \right)  :1\leq j\leq p\right\}  \underset
{\mathcal{D}}{\rightarrow}N\left(  0,\Sigma_{\left(  x_{1},...,x_{p}\right)
}\right)  ,
\]
where  $N\left(
0,\Sigma_{\left(  x_{1},...,x_{p}\right)  }\right)  $ is the centered Gaussian
distribution in $\mathbb{R}^{p},$ with covariance matrix $\Sigma_{\left(
x_{1},...,x_{p}\right)  }.$
\end{theorem}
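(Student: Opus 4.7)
The plan is to reduce to the homogeneous case already handled in Menneteau~(2007) via the conditional quantile transformation $\Phi_x$, and then transfer the result back through $\Phi_x^{-1}$ by a delta-method argument. Introduce the transformed observations $Z_{n,i}=\Phi_{X_{n,i}}(Y_{n,i})$. A standard change of variables ($dz=\varphi(x,y)\,dy$) shows that $\{(X_{n,i},Z_{n,i})\}$ is a point process with mean measure $n\,c\,\mathbf{1}_{\widetilde S}(x,z)\,\nu(dx)\,dz$ on $\widetilde S=\{(x,z):x\in E,\;0\le z\le g(x)\}$; that is, after transformation the frontier function is $g$ and the density $\varphi$ has been replaced by $1$. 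Since $Z_{n,r}^{\ast}:=\Phi_{X_{n,r}^{\ast}}(Y_{n,r}^{\ast})$ is then exactly the maximum of the $Z_{n,i}$ over the cell indexed by $I_{n,r}$, the inner sum in (\ref{defest}) coincides with the Menneteau~(2007) estimator (\ref{estimennet}) applied to the homogenized process, which I denote $\widetilde g_n(x;\hat c_{1,n})$, so $\hat f_n(x;\hat c_{1,n})=\Phi_x^{-1}(\widetilde g_n(x;\hat c_{1,n}))$.

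The first step is to invoke the CLT of Menneteau~(2007) for $\widetilde g_n$. Assumption (H.2) provides the bounds $0<m\le g\le M<\infty$ required there; (H.1) and (H.4)--(H.6) with $\varepsilon_n=1$ furnish the regularity of the partition, of the weights, and of the bias $g_n-g$; (H.3) supplies the covariance limit; and (C.1)--(C.2) control $\hat c_{1,n}$ and $\hat c_{2,n}$. This yields the joint convergence
\[
\left\{\frac{n\hat c_{2,n}(x_j)}{\kappa_n(x_j)}\bigl(\widetilde g_n(x_j;\hat c_{1,n})-g(x_j)\bigr):1\le j\le p\right\}\underset{\mathcal D}{\rightarrow}N\bigl(0,\Sigma_{(x_1,\dots,x_p)}\bigr).
\]

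The second step is a delta-method argument through $\Phi_x^{-1}$. By $(\Phi)$ and Remark~\ref{remtwo}, $\Phi_x^{-1}$ is continuously differentiable at $g(x)$ with derivative $1/\varphi(x,f(x))$. A first-order Taylor expansion gives
\[
\hat f_n(x;\hat c_{1,n})-f(x)=\frac{1}{\varphi(x,f(x))}\bigl(\widetilde g_n(x;\hat c_{1,n})-g(x)\bigr)+R_n(x),
\]
with $R_n(x)=O\bigl((\widetilde g_n(x;\hat c_{1,n})-g(x))^2\bigr)$. The first step gives $\widetilde g_n-g(x)=O_P(\kappa_n(x)/n)$, so assumption (H.7) (vacuous when $\varphi$ is constant) makes $(n/\kappa_n(x))R_n(x)\to 0$ in probability. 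Multiplying through by $\varphi(x,f(x))$ and replacing it by $\varphi(x_j,\hat f_n(x_j;\hat c_{1,n}))$ via the continuity of $\varphi$ (Remark~\ref{remtwo}) and the consistency of $\hat f_n$ just established, the conclusion follows by Slutsky, the delta acting diagonally on the $p$ components so that $\Sigma_{(x_1,\dots,x_p)}$ is preserved.

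The main obstacle is the reduction step: one must verify that the transformed point process really falls within the scope of Menneteau~(2007). In particular, the cells $\widetilde D_{n,r}=\{(x,z)\in\widetilde S:x\in I_{n,r}\}$, the oscillation control encoded in (H.2), the bias control in (H.5), and the variance-type condition (H.6) all have to be reinterpreted with $g$ in place of $f$; this is precisely where the regularity assumptions on $\varphi$ bear fruit. The delta-method transfer is then routine, once the negligibility of $R_n$ is established through (H.7).
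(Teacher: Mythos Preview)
Your proposal is correct and follows essentially the same route as the paper: transform to the homogeneous process via $\Phi_x$ (the paper's Lemma~\ref{lemKing}), apply the Menneteau~(2007) CLT to $\widetilde g_n$ (the paper's Proposition~\ref{propdebut}), and transfer back through $\Phi_x^{-1}$ by a second-order Taylor expansion whose remainder is killed by~(H.7) (the paper's Lemma~\ref{lemmapp}). One minor remark: the display in the theorem has $\varphi(x_j,f(x_j))$ (the ``$f(x_j;\hat c_{1,n})$'' is a typographical slip, as Corollary~\ref{coromain} makes clear), so your final step replacing $\varphi(x,f(x))$ by $\varphi(x,\hat f_n(x;\hat c_{1,n}))$ is not needed for Theorem~\ref{TCLvarphi} itself---that replacement is exactly the content of Corollary~\ref{coromain}.
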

\begin{corollary}
\label{coromain} Theorem~\ref{TCLvarphi} 
holds when 
$\varphi\left(  x_{j},f\left(  x_{j}\right)  \right)$  is
replaced by $\varphi(  x_{j},\hat{f}_{n}(  x_{j};\hat c_{1,n}))$.
\end{corollary}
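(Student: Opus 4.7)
The plan is to reduce the corollary to Theorem~\ref{TCLvarphi} via a multivariate Slutsky argument: it suffices to prove that for every $x\in F$,
\[
R_n(x):=\frac{\varphi\bigl(x,\hat f_n(x;\hat c_{1,n})\bigr)}{\varphi(x,f(x))}\;\xrightarrow{\;\mathbb P\;}\;1,
\]
since Theorem~\ref{TCLvarphi} already furnishes the joint convergence in distribution of the vector formed by the quantities
\[
T_n^{(j)}:=\frac{n\hat c_{2,n}(x_j)\varphi(x_j,f(x_j))}{\kappa_n(x_j)}\bigl(\hat f_n(x_j;\hat c_{1,n})-f(x_j)\bigr),\qquad 1\le j\le p.
\]
Multiplying the $j$-th coordinate by the scalar $R_n(x_j)\to 1$ in probability preserves the Gaussian limit by Slutsky's theorem.

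The key auxiliary step is the consistency $\hat f_n(x;\hat c_{1,n})-f(x)\to 0$ in probability. If $\varphi$ is constant, then $R_n(x)\equiv 1$ and there is nothing to prove, so assume $\varphi$ is non-constant. Rewrite
\[
\hat f_n(x;\hat c_{1,n})-f(x)=\frac{\kappa_n(x)}{n\hat c_{2,n}(x)\varphi(x,f(x))}\,T_n^{(j)}.
\]
By Theorem~\ref{TCLvarphi}, $T_n^{(j)}$ is bounded in probability; $(\mathrm{C.2})$ gives $\hat c_{2,n}(x)\to c>0$ in probability; $(\Phi)$ ensures $\varphi(x,f(x))>0$; and the non-constant branch of $(\mathrm{H.7})$, with $\varepsilon_n=1$, yields $\kappa_n(x)/n\to 0$. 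Consequently the right-hand side vanishes in probability.

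To conclude, invoke Remark~\ref{remtwo}(i): under $(\Phi)$, $y\mapsto\varphi(x,y)$ is continuous at $y=f(x)$. The continuous mapping theorem then gives
\[
\varphi\bigl(x,\hat f_n(x;\hat c_{1,n})\bigr)\;\xrightarrow{\;\mathbb P\;}\;\varphi(x,f(x)),
\]
and dividing by the strictly positive limit $\varphi(x,f(x))$ yields $R_n(x)\to 1$ in probability, as required. The only mildly delicate point is handling the constant-$\varphi$ alternative in $(\mathrm{H.7})$ separately, but in that case the statement of the corollary coincides tautologically with Theorem~\ref{TCLvarphi}; everywhere else the argument is a direct Slutsky reduction.
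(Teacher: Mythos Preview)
Your proof is correct and follows essentially the same route as the paper: both reduce the corollary to Theorem~\ref{TCLvarphi} by showing $\varphi(x,\hat f_n(x;\hat c_{1,n}))\to\varphi(x,f(x))$ in probability, via consistency $\hat f_n\to f$ (tightness from Theorem~\ref{TCLvarphi} together with $\kappa_n(x)/n\to 0$ from $(\mathrm{H.7})$) and continuity of $y\mapsto\varphi(x,y)$ at $f(x)$. The only cosmetic difference is that the paper writes a single proof covering Corollaries~\ref{coromain} and~\ref{coromain2} simultaneously in the $\limsup_n \varepsilon_n\log P(\cdot)=-\infty$ formalism, whereas you phrase the $\varepsilon_n=1$ case directly as a Slutsky argument.
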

This leads to an explicit asymptotic $\gamma\%$ confidence interval
for $f(x)$:
\[
\left[
\hat{f}_n(x;\hat c_{1,n})-z_{\gamma}
\frac{\kappa_n(x)}{n\hat c_{2,n}(x)\varphi(x,\hat{f}_n(x;\hat c_{1,n}))},
\hat{f}_n(x;\hat c_{1,n})+z_{\gamma}
\frac{\kappa_n(x)}{n\hat c_{2,n}(x)\varphi(x,\hat{f}_n(x;\hat c_{1,n}))}\right]  ,
\]
where $z_{\gamma}$ is the $(\gamma+1)/2$th quantile of the $N(0,1)$
distribution. Note that the computation of this interval does not require a
bootstrap procedure as for instance in Hall {\it et al}~(1998).

\noindent The following family of large deviations principle is sometimes referenced
as a moderate deviations principle (see e.g. Dembo and Zeitouni~(1993)).
\begin{theorem}
\label{devvarphi} 
Let $\varepsilon_n\downarrow 0$ and 
suppose $(\Phi)$, $\left( \mathrm{H.1}\right)  $-$\left(  \mathrm{H.7}\right)$ are verified.  
Let $\hat c_{1,n}$ and $\hat c_{2,n}$ verifying respectively $(\mathrm{C.1})$
and $(\mathrm{C.2})$. 
For all $\left(
x_{1},...,x_{p}\right)  \subset F$
such that $\Sigma_{\left(  x_{1},...,x_{p}\right)}$ is regular,
the sequence of random vectors 
\[
\left\{  \frac{\varepsilon_n^{1/2}n\hat c_{2,n}(x_j)\varphi\left(  x_{j},f\left(  x_{j};\hat c_{1,n}\right)  \right)  }%
{\kappa_{n}\left(  x_{j}\right)  }\left(  \hat{f}_{n}\left(  x_{j};\hat c_{1,n}\right)
-f\left(  x_{j}\right)  \right)  :1\leq j\leq p\right\} 
\]
follows the large deviations principle in $\mathbb{R}^p$ with speed $(\varepsilon_n)$
and good rate function
$$
I_{(x_1,\dots,x_p)}: u\in\mathbb{R}^p\mapsto \frac{1}{2} u\Sigma_{\left(  x_{1},...,x_{p}\right)}^{-1} \;^t u.
$$
\end{theorem}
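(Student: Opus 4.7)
The plan is to run the same scheme as in the proof of Theorem~\ref{TCLvarphi}, with the Gaussian CLT tools replaced by their moderate deviations counterparts. The first move is a homogenization: under the pointwise change of variable $Z_{n,i}=\Phi_{X_{n,i}}(Y_{n,i})$, the image point process on $\{(x,z):x\in E,\,0\le z\le g(x)\}$ has mean measure $nc\,\nu(dx)\,dz$, uniform in $z$. In these variables $\Phi_{X^*_{n,r}}(Y^*_{n,r})$ is simply the cellwise maximum over $I_{n,r}$, and the sum appearing in (\ref{defest}), read as a function of $g$, coincides with the uniform-case estimator of Remark~\ref{remone}. I then decompose
\begin{equation*}
\Phi_x(\hat f_n(x;\hat c_n))-g(x)=T_n(x)+B_n(x)+R_n(x),
\end{equation*}
with centered noise $T_n(x)=\sum_r\kappa_{n,r}(x)\bigl[\nu_{n,r}\Phi_{X^*_{n,r}}(Y^*_{n,r})-\int_{I_{n,r}}g\,d\nu+\tfrac{1}{nc}\bigr]$, deterministic bias $B_n(x)=g_n(x)-g(x)$, and $\hat c$-error $R_n(x)=\tfrac{1}{n}(\hat c_n(x)^{-1}-c^{-1})\sum_r\kappa_{n,r}(x)$. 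On the MDP scale $\kappa_n(x)/(n\varepsilon_n^{1/2})$, $B_n$ is deterministically $o(1)$ by (H.5), and $R_n$ is super-exponentially small at speed $\varepsilon_n$ by (C.1).

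The core step is the multivariate MDP for $(nc\,T_n(x_j)/\kappa_n(x_j))_{1\le j\le p}$ via Gärtner--Ellis. In case (P), the transformed cell maxima are independent across $r$ with explicit survival function $P(\Phi_{X^*_{n,r}}(Y^*_{n,r})\le z)=\exp(-nc\int_{I_{n,r}}(g-z)_{+}\,d\nu)$; case (E) is reduced to this one through the standard Poissonisation conditioning on $\{N_n(S)=n\}$. Expanding the Laplace transform of $\varepsilon_n^{-1/2}\sum_j\lambda_j\,nc\,T_n(x_j)/\kappa_n(x_j)$ to second order and summing over cells, (H.2) and (H.6) tame the oscillation $n\delta_n$ so that, combined with (H.4), the cubic remainder vanishes uniformly, while the quadratic piece converges to $\tfrac12\lambda^{t}\Sigma_{(x_1,\dots,x_p)}\lambda$ by (H.3), with (H.1) providing enough points per cell for the expansions to be valid. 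Gärtner--Ellis, together with the regularity of $\Sigma_{(x_1,\dots,x_p)}$, then delivers the MDP for $T_n$ with Gaussian good rate function $I_{(x_1,\dots,x_p)}$.

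Finally, I transfer the MDP from $\Phi_x(\hat f_n)-g$ to $\hat f_n-f$ by inverting $\Phi_x$. Assumption (H.7) forces $\varepsilon_n^{-1/2}\kappa_n(x)/n\to0$, so the first order Taylor expansion
\begin{equation*}
\hat f_n(x;\hat c_n)-f(x)=\frac{1}{\varphi(x,f(x))}\bigl(T_n(x)+B_n(x)+R_n(x)\bigr)+\mathrm{Rem}_n(x)
\end{equation*}
is valid, and $\mathrm{Rem}_n(x)$ is quadratic in $T_n+B_n+R_n$, hence exponentially negligible on the MDP scale by the previous steps. The factor $\hat c_{2,n}(x_j)$ in the normalization may be replaced by $c$ modulo exponentially small error via (C.2) and exponential equivalence, and the contraction principle applied to the linear map $u\mapsto u/\varphi(x,f(x))$ transfers the Gaussian rate onto the final statistic, the explicit $\varphi(x_j,f(x_j))$ factor cancelling exactly the derivative of $\Phi_{x_j}^{-1}$ at $g(x_j)$. \emph{The main obstacle} is the Gärtner--Ellis step: producing a second-order Laplace expansion whose cubic remainder, after being multiplied by the MDP blow-up $\varepsilon_n^{-1/2}$, still vanishes uniformly in $r$ and survives the Poissonisation that links cases (P) and (E); this is precisely what the combination of (H.4) and (H.6) is designed to accomplish.
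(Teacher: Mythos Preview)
Your plan is correct and follows essentially the same architecture as the paper: homogenize via $\Phi_{X_{n,i}}$ (the paper's Lemma~\ref{lemKing}), obtain the MDP for $\hat g_n-g$ in the uniform model, transfer it to $\hat f_n-f$ by a second-order Taylor expansion of $\Phi_x^{-1}$ with exponential equivalence of the remainder (the paper's Lemma~\ref{lemmapp}), and finally swap $c$ for $\hat c_{2,n}$ via (C.2). The only difference is that the paper treats the uniform-model MDP as a black box (Proposition~\ref{propdebut}, quoted from Menneteau (2007)), whereas you sketch its proof directly via G\"artner--Ellis; your identification of (H.4) and (H.6) as the assumptions driving the cubic-remainder control at scale $\varepsilon_n^{-1/2}$ is exactly right.
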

\begin{corollary}
\label{coromain2} Theorem~\ref{devvarphi} holds when 
$\varphi\left(  x_{j},f\left(  x_{j}\right)  \right)$  is
replaced by $\varphi(  x_{j},\hat{f}_{n}(  x_{j};\hat c_{1,n}))$.
\end{corollary}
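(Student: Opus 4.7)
The plan is to deduce Corollary~\ref{coromain2} from Theorem~\ref{devvarphi} via the large deviations analogue of Slutsky's lemma, namely exponential equivalence (see e.g.\ Dembo and Zeitouni~(1993), Theorem~4.2.13). Write $Z_n=(Z_n^{(1)},\ldots,Z_n^{(p)})$ for the random vector of Theorem~\ref{devvarphi} and $\tilde Z_n$ for the one appearing in Corollary~\ref{coromain2}. Then $\tilde Z_n^{(j)} = Z_n^{(j)} R_n^{(j)}$ with
$$R_n^{(j)} = \frac{\varphi(x_j,\hat f_n(x_j;\hat c_{1,n}))}{\varphi(x_j,f(x_j))},$$
the denominator being strictly positive by $(\Phi)$. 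If $R_n^{(j)} \to 1$ superexponentially at speed $\varepsilon_n$ for every $j$, then, combining with the exponential tightness of $Z_n$ (a free consequence of Theorem~\ref{devvarphi} since the rate function is good), we get that $\tilde Z_n$ and $Z_n$ are exponentially equivalent and thus share the same LDP.

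The central step is therefore the superexponential convergence $\hat f_n(x_j;\hat c_{1,n}) \to f(x_j)$ at speed $\varepsilon_n$. Apply Theorem~\ref{devvarphi} in the scalar case: the one-dimensional variable
$$U_n^{(j)} := \frac{\varepsilon_n^{1/2}\,n\,\hat c_{2,n}(x_j)\,\varphi(x_j,f(x_j))}{\kappa_n(x_j)}\bigl(\hat f_n(x_j;\hat c_{1,n})-f(x_j)\bigr)$$
obeys an MDP with Gaussian good rate function and is hence exponentially tight. For any $\delta>0$,
$$P\bigl(|\hat f_n(x_j;\hat c_{1,n})-f(x_j)|>\delta\bigr) = P\!\left(|U_n^{(j)}|>\delta\cdot\frac{\varepsilon_n^{1/2}\,n\,\hat c_{2,n}(x_j)\,\varphi(x_j,f(x_j))}{\kappa_n(x_j)}\right).$$
When $\varphi$ is non-constant, $(\mathrm{H.7})$ says exactly that $\varepsilon_n^{-1/2}\kappa_n(x_j)/n\to 0$, so the threshold on the right tends to $+\infty$; combined with $(\mathrm{C.2})$, which keeps $\hat c_{2,n}(x_j)$ superexponentially close to $c>0$, the LDP upper bound for $U_n^{(j)}$ yields decay at an arbitrary rate, hence superexponential decay at speed $\varepsilon_n$. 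When $\varphi$ is constant, $R_n^{(j)}\equiv 1$ and there is nothing to prove.

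Continuity of $y\mapsto\varphi(x_j,y)$ at $y=f(x_j)$, guaranteed by Remark~\ref{remtwo}, then transfers this to $R_n^{(j)}\to 1$ superexponentially: given $\delta'>0$, choose $\eta>0$ with $|y-f(x_j)|<\eta$ implying $|\varphi(x_j,y)/\varphi(x_j,f(x_j))-1|<\delta'$, and use $P(|R_n^{(j)}-1|>\delta') \leq P(|\hat f_n(x_j;\hat c_{1,n})-f(x_j)|>\eta)$. Exponential equivalence follows from the standard splitting
$$P\bigl(|Z_n^{(j)}||R_n^{(j)}-1|>\delta\bigr) \leq P(|Z_n^{(j)}|>M) + P(|R_n^{(j)}-1|>\delta/M),$$
letting $n\to\infty$ first, then $M\to\infty$: the first term is handled by the LDP upper bound and goodness of the rate, the second by the previous step. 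Summing over $j$ controls $\|\tilde Z_n-Z_n\|_\infty$, and Theorem~4.2.13 of Dembo and Zeitouni~(1993) gives the conclusion.

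I expect the main obstacle to be only bookkeeping: the analytic content sits entirely in Theorem~\ref{devvarphi}, and assumption $(\mathrm{H.7})$ is precisely calibrated to render the linearization $\varphi(x_j,\hat f_n)\approx\varphi(x_j,f(x_j))$ harmless at the moderate-deviations speed. The only point requiring care is keeping track of which convergences must be superexponential (the estimate of $f$, via $(\mathrm{H.7})$ and $(\mathrm{C.2})$) versus which are already encoded in the hypothesis on $\hat c_{1,n}$ via $(\mathrm{C.1})$, already absorbed in Theorem~\ref{devvarphi}.
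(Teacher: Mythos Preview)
Your argument is correct and essentially identical to the paper's. Both proofs reduce the matter to showing that $\varphi(x_j,\hat f_n(x_j;\hat c_{1,n}))\to\varphi(x_j,f(x_j))$ superexponentially at speed $\varepsilon_n$, use the continuity of $y\mapsto\varphi(x_j,y)$ at $f(x_j)$ to reduce this to superexponential convergence of $\hat f_n(x_j;\hat c_{1,n})$ to $f(x_j)$, and obtain the latter from the MDP of Theorem~\ref{devvarphi} together with the divergence $\varepsilon_n^{1/2}n/\kappa_n(x_j)\to\infty$ supplied by~$(\mathrm{H.7})$; your explicit invocation of exponential equivalence (Dembo--Zeitouni, Theorem~4.2.13) and your separate treatment of the constant-$\varphi$ case are slightly more formal than the paper's hands-on splitting argument, but the substance is the same.
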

As a consequence, one can obtain a rate of convergence in the 
almost sure consistency of the frontier estimator. More precisely,
Corollary~\ref{coromain2} and the Borel-Cantelli Lemma entail that,
for all $x\in E$,
$$
\limsup_{n\to\infty}
  \frac{n\hat c_{2,n}(x_j)\varphi\left(  x_{j},f\left(  x_{j};\hat c_{1,n}\right)  \right)  }
{(2\log n)^{1/2}\kappa_{n}\left(  x_{j}\right)  }\left|  \hat{f}_{n}\left(  x_{j};\hat c_{1,n}\right)
-f\left(  x_{j}\right)  \right| \leq 1 \mbox{ a.s.} 
$$
In terms of confidence interval, Corollary~\ref{coromain2} can also
be useful to compute the logarithmic asymptotic level of confidence
intervals with asymptotic level 0. See Menneteau (2007) for further
details. Finally, in estimation theory, Corollary~\ref{coromain2}
is of interest to compute the Kallenberg efficiency of $\hat f_n$
(Kallenberg, 1983a, 1983b).

\section {Star-shaped supports}
\label{secpolar}

One motivating application of the general framework introduced in
section~\ref{estima} is the estimation of star-shaped supports
in ${\mathbb R}^d$, $d\geq 2$. We refer to Baillo and Cuevas~(2001)
for an adaptation of the estimator defined by Devroye and Wise~(1980)
to this situation.
The support can be parameterized in polar coordinates such as:
\[
S^{\mathrm{pol}}=\left\{  \left(  u,v\right)  = P_d(x,y)
:x\in E,\;
 0\leq y\leq f\left(   x\right)  \right\}  ,
\]
where $E=[0,\pi)^{d-2}\times[0,2\pi)$, $ 
f:E \rightarrow\mathbb{R}^{\mathbb{+}}
$
is a measurable function, and the mapping $P_d:E\times(0,+\infty)\to \mathbb{R}^{d-1}\times (0,+\infty)$
with
$$ P_d(x,y)=y 
\displaystyle\left(\cos x_1,\cos x_2\sin x_1,\dots,\cos x_{d-1}\prod_{j=1}^{d-2}\sin x_j,
\prod_{j=1}^{d-1}\sin x_j\right)^t
$$
defines the polar coordinates (see Mardia {\it et al}~(1979), section 2.4)
in $\mathbb{R}^d$.  We consider the sequence of Poisson or empirical point processes
\[
N_{n}^{\mathrm{pol}}=\left\{  N_{n}^{\mathrm{pol}}\left(  B\right)
:B\in\mathcal{B}\left( S^{\mathrm{pol}}\right)  \right\}  ,\;\;n\geq1,
\]
with mean measure 
$$
n\;c\;\mathbf{1}_{S^{\mathrm{pol}}}(u,v)\;du\;dv,
$$
where $c>0$.
Let $\left(  U_{n,i},V_{n,i}\right)  _{i\geq1}$ be the point process
associated to $N_{n}^{\mathrm{pol}}$. 
Our aim is to estimate $S^{\mathrm{pol}}$ via an estimation of 
the associated frontier function~$f$.  
This function can also be seen as the frontier of the support
\[
S=\left\{  \left(  x,y\right)  :x\in E,\;0\leq y\leq
f\left(  x\right)  \right\} 
\]
of the point process 
$\left(  X_{n,i},Y_{n,i}\right)  _{i\geq1}$ defined by for all $i\geq1,$
\[
\left(  U_{n,i},V_{n,i}\right)  =P_d(X_{n,i},Y_{n,i}),
\]
where $(X_{n,i})$ represents the sequence of polar angles
and $(Y_{n,i})$ the sequence of polar radius.

In the case $d=2$, classical planar polar coordinates are obtained,
see figure~\ref{figdessin} for an illustration. For $d=3$, we get
usual spherical coordinates. Note that, in this situation, cylindrical
coordinates can also enter the framework of section~\ref{asymp}.


It will appear in Lemma~\ref{lemvarphi} in section~\ref{secproofs}, that the point process
$\left(  X_{n,i},Y_{n,i}\right)  _{i\geq1}$ is no more homogeneous
but benefits of the mean measure~(\ref{eqmesure}) with 
$$
\varphi(x,y)=\gamma_d\; y^{d-1} \;\;\mbox{ and }\;\; \nu(dx)=h_d(x)dx,
$$
$$
\mbox{where }\;\;\gamma_d=\int_E\prod_{j=1}^{d-1}(\sin x_j)^{d-1-j}dx\;\; \mbox{ and }
\;\;h_d(x)=\gamma_d^{-1}\prod_{j=1}^{d-1}(\sin x_j)^{d-1-j},
$$
i.e.
\begin{equation}
\label{eqmesurepol}
n\;c\; \gamma_d\; y^{d-1}\;  \mathbf{1}_{S}\left(  x,y\right) \;h_d(x)\;dx\; dy.
\end{equation}
As for choosing the partition, a natural choice would be to consider
equiprobable sets $(I_{n,r})$ with respect to the polar angle distribution. 
Unfortunately, from~(\ref{eqmesurepol}), it is easily seen that the polar angle density is
\begin{equation}
\label{densX}
{h_d(x)f^d(x)}\left/\int_E h_d(t)f^d(t)dt\right.,
\end{equation}
and thus depends on the unknown frontier function $f$. Without prior knowledge
on $f$, one may consider in~(\ref{densX}) that $f$ is a constant. In this case,
the measure induced by~(\ref{densX}) is $\nu$. Moreover, since $f$ is both
bounded from zero and upper bounded,
(\ref{densX}) implies that the polar angle distribution is equivalent to $\nu$.
These considerations lead us to choose a measurable partition of $E$
such that $\nu(I_{n,r})=1/k_n$ for $1\leq r\leq k_{n}$.
In accordance with the notations of section~\ref{estima}, let 
for all $1\leq r\leq k_{n}$, 
\begin{eqnarray*}
D_{n,r}&=&\left\{  \left(  x,y\right)  :x\in I_{n,r},\;0\leq y\leq
f\left(  x\right)  \right\},\\
Y_{n,r}^{\ast}&=&\max\left\{  Y_{n,i}:\left(  X_{n,i},Y_{n,i}\right)  \in
D_{n,r}\right\},\\
\mbox{ and }N_{n,r}&=&N_{n}\left(  D_{n,r}\right).
\end{eqnarray*}

\subsection{A general kernel estimator}
\label{secpolar1}

In the sequel, we adopt the following weight function
\begin{equation}
\label{choixnoyau1}
\kappa_{n,r}(x)= k_n\int_{I_{n,r}}K_{n}(x,t)\nu(dt),
\end{equation}
where $K_n$ is a general smoothing kernel,
and the global estimator of $c$ defined by
\begin{equation}
\label{choixcn}
\hat c_n^{glo}  = \frac{k_n^2}{n} \left(\sum_{r=1}^{k_{n}}
 \frac{\Phi_{X_{n,r}^*}(Y_{n,r}^{\ast} )}{N_{n,r}}\right)^{-1} =
 \frac{d}{\gamma_d} \frac{k_n^2}{n} \left(\sum_{r=1}^{k_{n}}
 \frac{(Y_{n,r}^{\ast} )^d}{N_{n,r}}\right)^{-1} 
\end{equation}
both introduced in Menneteau~(2007).
The framework of section~\ref{estima}
leads to the estimator of the frontier $f$ below (see Lemma~\ref{lemme4} in section~\ref{secproofs}),
\begin{equation}
\label{estipol}
\hat{f}_{n}^{\mathrm{pol}}(x)=\left(
\sum_{r=1}^{k_n}\left(\int_{I_{n,r}}K_{n}(x,t)h_d(t)dt
+ \frac{\int_E K_n(x,t)h_d(t)dt}{k_n N_{n,r}}\right)
\left( Y_{n,r}^{\ast}\right)  ^{d}\right)  ^{1/d},
\end{equation}
and the associated estimator of the support is given by
\[
\hat S_n^{\mathrm{pol}}=\left\{  \left(  u,v\right)  = P_d(x,y)
:x\in E,\;
 0\leq y\leq \hat f_n^{\mathrm{pol}}\left(   x\right)  \right\}.
\]
In this context, Theorem~\ref{TCLvarphi} and Theorem~\ref{devvarphi} permit
to derive the asymptotic behavior of the estimation error 
in the direction $x$ defined as
$\Delta_n(x)= \hat{f}_{n}^{\mathrm{pol}}(  x) -f(x)$.
Let us emphasize that $|\Delta_n(x)|$ can also be interpreted as the
length of the slice in the direction $x$ of the symmetrical difference
between the estimated support $\hat S_n^{\mathrm{pol}}$ and the true
one $S^{\mathrm{pol}}$.
Establishing similar results for the surface of the symmetrical
difference, i.e. the Hausdorff distance, would require uniform 
convergence results, and is thus beyond the scope of this paper.

The following notations will reveal useful to state
the assumptions on $K_n$.
For all $x\in E$ and $1\leq r\leq k_n$, consider the oscillation of
$K_n(x,.)$ over $I_{n,r}$,
\[
\Gamma_{n,r}\left(  x\right)  =\;\sup\left\{  K_{n}(x,t)-K_{n}(x,s):\left(
s,t\right)  \in I_{n,r}\times I_{n,r}\right\},
\]
the smoothing error 
\[
\Psi_{n}\left(  x\right)  =\left\vert \int_{E}K_{n}(x,t)f^d\left(  t\right)
\nu\left(  dt\right)  -f^d\left(  x\right)  \right\vert 
\]
and 
\[
\Xi_{n}\left(  x\right)  =k_{n}\left\vert \sum_{r=1}^{k_{n}}\int
_{I_{n,r}\times I_{n,r}}K_{n}\left(  x,t\right)  \left(  f^d\left(  s\right)
-f^d(t)\right)  \nu\left(  dt\right)  \nu\left(  ds\right)  \right\vert,
\]
which can be interpreted as the loss of information due to the partitioning.
Let us also introduce the maximum oscillation of $f^d$ over each set of the partition
\[
\omega_{n}=\ \underset{1\leq r\leq k_{n}}{\max}\sup_{(s,t)\in I_{n,r}^2}
(f^d(s)-f^d(t)),
\]
and the classical norms
\[
\left\Vert K_n(x,.)\right\Vert _{p}=\left(  \int_{E}\left\vert K_n\left(x,  t\right)
\right\vert ^{p}\nu\left(  dt\right)  \right)  ^{1/p}\;\;\mathrm{and}
\;\;\left\Vert K_n(x,.)\right\Vert _{E}=\,\underset{t\in E}{\sup}\left\vert K_n\left(x,t \right)  \right\vert. 
\]
In this context, the general assumptions (H.3)-(H.7) can be expressed as:
\\
$\left(  \mathrm{K.1}\right)  $ For all
$n\geq1,$ $\int_{E\times E}\left\vert K_{n}(x,t)\right\vert \nu\left(
dx\right)  \nu\left(  dt\right)  <\infty$.
\medskip\newline$\left(
\mathrm{K.2}\right)  \;$For all $\left(  x_{1},x_{2}\right)  \in E\times E$,
\[
\sum_{r=1}^{k_{n}}\Gamma_{n,r}\left(  x_{1}\right)  \int_{I_{n,r}}\left\vert
K_{n}(x_{2},t)\right\vert \nu\left(  dt\right)  =o\left(  \left\Vert
K_{n}(x_{1},\;\mathbf{.}\;)\right\Vert _{2}\left\Vert K_{n}(x_{2}%
,\;\mathbf{.}\;)\right\Vert _{2}\right)  \;\mathrm{as\;}n\rightarrow\infty.
\]
$\left(  \mathrm{K.3}\right)  \;$For all $\left(  x_{1},x_{2}\right)  \in
E\times E$,
\[
\left\langle K_{n}(x_{1},\;\mathbf{.}\;),K_{n}(x_{2},\;\mathbf{.}%
\;)\right\rangle _{2}\left(  \left\Vert K_{n}(x_{1},\;\mathbf{.}\;)\right\Vert
_{2}\left\Vert K_{n}(x_{2},\;\mathbf{.}\;)\right\Vert _{2}\right)
^{-1}\rightarrow\;\sigma(x_{1},x_{2})\;\mathrm{as\;}n\rightarrow\infty.
\]
$\left(  \mathrm{K.4}\right)  \;$For all $x\in E$,
\[
\left(  \varepsilon_{n}k_{n}\right)  ^{-1/2}\left\Vert K_{n}(x,\;\mathbf{.}%
\;)\right\Vert _{2}^{-1}\;\left\Vert K_{n}(x,\;\mathbf{.}\;)\right\Vert
_{E}\rightarrow0\;\mathrm{as\;}n\rightarrow\infty.
\]
$\left(  \mathrm{K.5}\right)  \;$For all $x\in E$,
\[
\varepsilon_{n}^{1/2}nk_{n}^{-1/2}\left\Vert K_{n}(x,\;\mathbf{.}%
\;)\right\Vert _{2}^{-1}\;\max\left(  \Psi_{n}\left(  x\right)  ;\;\Xi
_{n}\left(  x\right)  \right)  \rightarrow0\;\mathrm{as\;}n\rightarrow\infty.
\]
$\left(  \mathrm{K.6}\right)  \;$For all $x\in E$,
\[
\varepsilon_{n}^{1/4}nk_{n}^{-3/4}\left\Vert K_{n}(x,\;\mathbf{.}%
\;)\right\Vert _{2}^{-1/2}\left\Vert K_{n}(x,\;\mathbf{.}\;)\right\Vert
_{1}^{1/2}\omega_{n}\rightarrow0\;\mathrm{as\;}n\rightarrow\infty.
\]
$\left(  \mathrm{K.7}\right)  \;$For all $x\in E$,
\[
\varepsilon_{n}^{-1/2}n^{-1}k_{n}^{1/2} \| K_n(x,.) \|_2\to 0
\;\mathrm{as\;}n\rightarrow\infty.
\]
The results established in section~\ref{asymp} yield:
\begin{theorem}
\label{TCLpol} 
 Let $\varepsilon_n=1$ and suppose that $\mathrm{(H.1)}$, $\mathrm{(H.2)}$, $\left(
\mathrm{K.1}\right)  $-$\left(  \mathrm{K.7}\right)$ are verified.  \\ 
For all $\left(
x_{1},...,x_{p}\right)  \subset E,$%
\[
\left\{  nk_n^{-1/2}\|K_n(x_j,.)\|_2^{-1}\; \hat c_{n}^{glo}\;\gamma_d\; f^{d-1}\left(  x_{j}\right)  
\Delta_n\left(  x_{j}\right) 
  :1\leq j\leq p\right\}  \underset
{\mathcal{D}}{\rightarrow}N\left(  0,\Sigma_{\left(  x_{1},...,x_{p}\right)
}\right). 
\]
\end{theorem}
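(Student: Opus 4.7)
The plan is to cast the star-shaped estimator $\hat f_n^{\mathrm{pol}}$ as an instance of the generic estimator~\eqref{defest} and to apply Theorem~\ref{TCLvarphi} with $\varepsilon_n=1$. As established in Lemma~\ref{lemme4}, with $\varphi(x,y)=\gamma_d y^{d-1}$, $\Phi_x(y)=\gamma_d y^d/d$, equiprobable cells $\nu_{n,r}=1/k_n$, weights $\kappa_{n,r}(x)=k_n\int_{I_{n,r}}K_n(x,t)\nu(dt)$, and $\hat c_n(\cdot)=\hat c_n^{glo}$, formula~\eqref{defest} is precisely~\eqref{estipol}, and $(\Phi)$ is trivially satisfied by the power $\varphi$. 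It remains to deduce $(\mathrm{H.3})$--$(\mathrm{H.7})$ from $(\mathrm{K.1})$--$(\mathrm{K.7})$, together with $(\mathrm{C.1})$ and $(\mathrm{C.2})$ for the global estimator $\hat c_n^{glo}$.

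The key bridge between the two sets of assumptions is the asymptotic identity
\[
\sum_{r=1}^{k_n}\kappa_{n,r}(x_i)\kappa_{n,r}(x_j)=k_n\langle K_n(x_i,\cdot),K_n(x_j,\cdot)\rangle_2+o\bigl(k_n\|K_n(x_i,\cdot)\|_2\|K_n(x_j,\cdot)\|_2\bigr),
\]
which follows by Fubini and the oscillation control $(\mathrm{K.2})$, using $\nu(I_{n,r})=1/k_n$. In particular $\kappa_n(x)\sim k_n^{1/2}\|K_n(x,\cdot)\|_2$, so $(\mathrm{K.3})$ yields $(\mathrm{H.3})$ with the same limit $\sigma(x_i,x_j)$. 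The elementary bound $\max_r|\kappa_{n,r}(x)|\le\|K_n(x,\cdot)\|_E$ then turns $(\mathrm{K.4})$ into $(\mathrm{H.4})$, and $(\mathrm{K.7})$ directly yields $(\mathrm{H.7})$.

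For the bias-type conditions, I would split
\[
g_n(x)-g(x)=\frac{\gamma_d}{d}\left(\int_E K_n(x,t)f^d(t)\nu(dt)-f^d(x)\right)+\frac{\gamma_d}{d}\,k_n\sum_{r=1}^{k_n}\int_{I_{n,r}\times I_{n,r}}K_n(x,t)\bigl(f^d(s)-f^d(t)\bigr)\nu(dt)\nu(ds),
\]
which is bounded in absolute value by $(\gamma_d/d)(\Psi_n(x)+\Xi_n(x))$, so $(\mathrm{K.5})$ implies $(\mathrm{H.5})$. Observing that $\delta_n=(\gamma_d/d)\omega_n/k_n$ and that $\sum_r|w_{n,r}(x)|\le k_n\|K_n(x,\cdot)\|_1/\kappa_n(x)\sim k_n^{1/2}\|K_n(x,\cdot)\|_1/\|K_n(x,\cdot)\|_2$, squaring the quantity in $(\mathrm{K.6})$ produces exactly the expression appearing in $(\mathrm{H.6})$.

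Finally, $\hat c_n^{glo}$ was shown in Menneteau~(2007) to fulfill $(\mathrm{C.1})$ and $(\mathrm{C.2})$ with $\varepsilon_n=1$ under the present hypotheses. Taking $\hat c_{1,n}=\hat c_{2,n}=\hat c_n^{glo}$ in Theorem~\ref{TCLvarphi} and substituting $\varphi(x_j,f(x_j))=\gamma_d f^{d-1}(x_j)$ together with the equivalent $\kappa_n(x_j)\sim k_n^{1/2}\|K_n(x_j,\cdot)\|_2$ into its conclusion produces the announced convergence. The main delicate step is the above uniform replacement of $\kappa_n$ by $k_n^{1/2}\|K_n(x,\cdot)\|_2$, which rests crucially on both the equiprobability of the partition and the oscillation control $(\mathrm{K.2})$; the rest amounts to careful bookkeeping.
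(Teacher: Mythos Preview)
Your proposal is correct and follows the same route as the paper: identify $\hat f_n^{\mathrm{pol}}$ as an instance of~\eqref{defest} via Lemma~\ref{lemme4}, verify that $(\mathrm{K.1})$--$(\mathrm{K.7})$ imply $(\mathrm{H.3})$--$(\mathrm{H.7})$ together with the equivalence $\kappa_n(x)\sim k_n^{1/2}\|K_n(x,\cdot)\|_2$, invoke Menneteau~(2007) for $(\mathrm{C.1})$--$(\mathrm{C.2})$, and conclude by Theorem~\ref{TCLvarphi}. The only difference is one of presentation: the paper delegates the implications $(\mathrm{K.i})\Rightarrow(\mathrm{H.i})$ and the asymptotic for $\kappa_n$ to the proofs of Theorems~3.1--3.2 in Menneteau~(2007), whereas you actually sketch these computations, which is arguably more self-contained.
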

\begin{theorem}
\label{MDpol} 
Let $\varepsilon_n\downarrow 0$ and 
suppose $\mathrm{(H.1)}$, $\mathrm{(H.2)}$, $\left( \mathrm{K.1}\right)  $-$\left(  \mathrm{K.7}\right)$ are verified.  
For all $\left(
x_{1},...,x_{p}\right)  \subset E$
such that $\Sigma_{\left(  x_{1},...,x_{p}\right)}$ is regular,
the sequence of random vectors 
\[
\left\{  \varepsilon_n^{1/2}nk_n^{-1/2}\|K_n(x_j,.)\|_2^{-1} \;\hat c_{n}^{glo}\;\gamma_d \; f^{d-1}\left(  x_{j}\right)  
\Delta_n(x_j) 
 :1\leq j\leq p\right\} 
\]
follows the large deviations principle in $\mathbb{R}^p$ with speed $(\varepsilon_n)$
and good rate function $I$.
\end{theorem}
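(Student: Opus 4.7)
The plan is to deduce Theorem~\ref{MDpol} from Theorem~\ref{devvarphi} (combined with Corollary~\ref{coromain2}) by verifying that, in the star-shaped setting, the abstract hypotheses $(\mathrm{H.3})$--$(\mathrm{H.7})$ and $(\mathrm{C.1})$--$(\mathrm{C.2})$ reduce to the kernel-type conditions $(\mathrm{K.1})$--$(\mathrm{K.7})$ together with the moderate-deviations properties of the global estimator $\hat c_n^{glo}$. The identification with the framework of section~\ref{estima} uses $\varphi(x,y)=\gamma_d y^{d-1}$, $\nu(dx)=h_d(x)\,dx$, and the equiprobable partition $\nu_{n,r}=1/k_n$, which together give $g(x)=\gamma_d f^d(x)/d$. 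Since $\hat c_{1,n}=\hat c_{2,n}=\hat c_n^{glo}$, Corollary~\ref{coromain2} will supply the announced normalisation once $\kappa_n(x)$ has been translated.

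The first concrete step is to show, via a Riemann-sum estimate and assumption $(\mathrm{K.2})$, that
\[
\kappa_n^2(x)=k_n^2\sum_{r=1}^{k_n}\Bigl(\int_{I_{n,r}}K_n(x,t)\,\nu(dt)\Bigr)^2 = k_n\|K_n(x,\cdot)\|_2^2\,(1+o(1)),
\]
so that the normalisation $n\hat c_{2,n}(x_j)\varphi(x_j,f(x_j))/\kappa_n(x_j)$ of Theorem~\ref{devvarphi} becomes exactly $nk_n^{-1/2}\|K_n(x_j,\cdot)\|_2^{-1}\hat c_n^{glo}\gamma_d f^{d-1}(x_j)$. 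The same computation yields $\sum_r w_{n,r}(x_1)w_{n,r}(x_2)=\langle K_n(x_1,\cdot),K_n(x_2,\cdot)\rangle_2/(\|K_n(x_1,\cdot)\|_2\|K_n(x_2,\cdot)\|_2)+o(1)$, so that $(\mathrm{K.3})$ implies $(\mathrm{H.3})$, and $\max_r|w_{n,r}(x)|\lesssim k_n^{1/2}\|K_n(x,\cdot)\|_E/\|K_n(x,\cdot)\|_2$, so that $(\mathrm{K.4})$ implies $(\mathrm{H.4})$. Likewise $(\mathrm{K.7})$ gives $(\mathrm{H.7})$ directly through the rewriting above.

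For $(\mathrm{H.5})$, the bias term $g_n(x)-g(x)$ must be expanded as $(\gamma_d/d)$ times a quantity controlled by the smoothing error $\Psi_n(x)$ plus the partitioning error $\Xi_n(x)$; $(\mathrm{K.5})$ then gives the required decay at speed $\varepsilon_n^{1/2}/n\cdot k_n^{1/2}\|K_n(x,\cdot)\|_2$. For $(\mathrm{H.6})$, one observes that the oscillation of $g$ over $I_{n,r}$ is bounded by $(\gamma_d/d)\omega_n$, so $n\delta_n\le(\gamma_d/d)\cdot n k_n^{-1}\omega_n$; combined with $\sum_r|w_{n,r}(x)|\le k_n^{1/2}\|K_n(x,\cdot)\|_1/\|K_n(x,\cdot)\|_2$, this yields $(\mathrm{H.6})$ from $(\mathrm{K.6})$ (the exponent $1/4$ in $(\mathrm{K.6})$ matches the squared $n\delta_n$ factor). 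Finally, the moderate-deviations statements $(\mathrm{C.1})$ and $(\mathrm{C.2})$ for $\hat c_n^{glo}$ are established by Menneteau~(2007) under $(\mathrm{H.1})$, $(\mathrm{H.2})$ and may be invoked directly.

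I expect the main obstacle to be the bias analysis of step three, that is, rewriting $g_n-g$ cleanly as an affine combination of $\Psi_n$ and $\Xi_n$ (the latter accounts for the loss coming from the piecewise-constant averaging of $K_n$ over the cells $I_{n,r}$), and then checking that the exponents and powers of $\varepsilon_n$, $k_n$, $n$ and $\|K_n(x,\cdot)\|_2$ appearing in $(\mathrm{K.5})$--$(\mathrm{K.6})$ indeed reproduce the abstract rates $(\mathrm{H.5})$--$(\mathrm{H.6})$ after the substitution $\kappa_n(x)\asymp k_n^{1/2}\|K_n(x,\cdot)\|_2$. Once all hypotheses are verified, applying Corollary~\ref{coromain2} to the vector $(x_1,\dots,x_p)\subset E$ with $\Sigma_{(x_1,\dots,x_p)}$ regular gives the large-deviations principle with speed $(\varepsilon_n)$ and good rate function $I=I_{(x_1,\dots,x_p)}$, which is precisely the statement of Theorem~\ref{MDpol}.
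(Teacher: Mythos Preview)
Your proposal is correct and follows essentially the same route as the paper: reduce Theorem~\ref{MDpol} to Theorem~\ref{devvarphi} by checking that $(\mathrm{K.1})$--$(\mathrm{K.7})$ together with $(\mathrm{H.1})$, $(\mathrm{H.2})$ imply the abstract hypotheses $(\mathrm{H.3})$--$(\mathrm{H.7})$ and $(\mathrm{C.1})$--$(\mathrm{C.2})$, and that $\kappa_n(x)\sim k_n^{1/2}\|K_n(x,\cdot)\|_2$; the paper simply delegates these verifications to Menneteau~(2007) whereas you sketch them explicitly. Two small corrections: your bound for $\max_r|w_{n,r}(x)|$ should carry a factor $k_n^{-1/2}$ rather than $k_n^{1/2}$ (since $|\kappa_{n,r}(x)|\le \|K_n(x,\cdot)\|_E$ and $\kappa_n(x)\sim k_n^{1/2}\|K_n(x,\cdot)\|_2$), and since the statement of Theorem~\ref{MDpol} involves $f^{d-1}(x_j)=\varphi(x_j,f(x_j))/\gamma_d$ rather than its estimated counterpart, the direct reference is Theorem~\ref{devvarphi} itself rather than Corollary~\ref{coromain2}.
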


\subsection{Illustration in the bi-dimensional case}
\label{secpolar2}

\noindent As an illustration, we consider the case $d=2$.
In this situation, 
$h_2(x)=(2\pi)^{-1}$. 
For the sake of simplicity, we focus on the case where the
partition is equidistant i.e. $I_{n,r}=[2\pi(r-1)k_n^{-1},\; 2\pi r k_n^{-1})$,
$r=1,\dots,k_n$.
For periodicity reasons, we consider the
Dirichlet's kernel
\begin{equation}
K_{n}^D\left(  x,t\right)  =\sum_{j=0}^{\ell_{n}}e_{j}(x)e_{j}(t),\qquad
(x,t)\in [0,2\pi]^{2},\label{Dkern}%
\end{equation}
associated to the trigonometric basis (Tolstov~(1976)):
\begin{equation}
\label{choixnoyau3}
e_{0}(x)=(2\pi)^{-1},\qquad e_{2j-1}(x)={\pi}^{-1}\cos{(j x)},\qquad e_{2j}%
(x)={\pi}^{-1}\sin{(jx)},\quad j\geq1.
\end{equation}
It is well-known that the Dirichlet's kernel can be rewritten as
\begin{eqnarray*}
K_n^D(x,t) &=& \frac{\sin{(2^{-1}(1+\ell_n)(x-t))}}{\sin{(2^{-1}(x-t))}} \mbox{ if } x\neq t\\
&=& 1+\ell_n \mbox{ if } x=t.
\end{eqnarray*}
Since, for all $x\in E$, $\int_E K_n^D(x,t)h_2(t)dt=1$,
the estimator~(\ref{estipol}) becomes
\begin{eqnarray*}
\hat{f}_{n}^{\mathrm{pol}}(x)&=&\left(
\sum_{r=1}^{k_n}\left(\frac{1}{2\pi}\int_{I_{n,r}}K_n^D(x,t)dt
+ \frac{1}{k_n N_{n,r}}\right)
\left( Y_{n,r}^{\ast}\right)  ^{2}\right)  ^{1/2},\\
&=&\left(
\frac{1}{k_n}\sum_{r=1}^{k_n}\left(\int_{r-1}^r K_n^D(x,2\pi k_n^{-1} s)ds
+ \frac{1}{N_{n,r}}\right)
\left( Y_{n,r}^{\ast}\right)  ^{2}\right)  ^{1/2}.
\end{eqnarray*}
In the above context, we have the following result.
\begin{corollary}
\label{coropol}
Suppose $f$ is $C^2$ with
$f(0)=f(2\pi)$ and $f'(0)=f'(2\pi)$.  
Assume that 
(i)~$n^{-1}k_{n}\log(n) =o(1)$,
(ii)~$\ell_{n}\log(\ell_{n}) k_{n}^{-1}=o(1)$,
(iii)~$nk_{n}^{-1/2}\ell_{n}^{-2}=O(1)$,\\
(iv)~$n k_{n}^{-5/2}\ell_n^{1/2}\log(\ell_n)=o(1)$ and
(v)~$n k_{n}^{-7/4}\ell_n^{-1/4}(\log(\ell_n))^{1/2}=o(1)$.\\
Then, for all $\left(  x_{1},...,x_{p}%
\right)  \subset\left[  0,2\pi\right)  ,$%
\begin{equation}
\left\{  v_n \hat c_n^{glo} \hat{f}^{\mathrm{pol}}_n(x_j)\;\Delta_{n}
(x_{j}):1\leq j\leq p\right\}  \underset{\mathcal{D}}{\rightarrow
}N\left(  0,I_{p}\right)  , \label{TriKern}%
\end{equation}
where $v_n=n(\ell_{n}k_{n})^{-1/2}$.  
The choice $\ell_{n}=n^{{10}/{27}}$ and
$k_{n}=n^{{14}/{27}}(\log\left( n\right))^{2/7}  u_{n}^{2}$ 
leads to $v_n= n^{{5}/{9}} \log\left(  n\right)^{-1/7}u_{n}^{-1}$,
where $u_{n} \rightarrow\infty$ arbitrarily slowly.  
\end{corollary}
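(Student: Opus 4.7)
The strategy is to deduce Corollary~\ref{coropol} from Theorem~\ref{TCLpol} applied to $d=2$, the equidistant partition $I_{n,r}=[2\pi(r-1)/k_n,\,2\pi r/k_n)$, and the Dirichlet kernel $K_n^D$. The bulk of the proof is the verification of hypotheses (H.1), (H.2) and (K.1)--(K.7) under conditions (i)--(v), together with an explicit identification of the limiting covariance.

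First I would collect the classical Fourier-analytic estimates for $K_n^D(x,t)=D_N(x-t)$ with $2N=\ell_n$ and $\nu(dt)=dt/(2\pi)$: namely $\|K_n^D(x,\cdot)\|_2^2=1+\ell_n$, $\|K_n^D(x,\cdot)\|_E=1+\ell_n$, $\|K_n^D(x,\cdot)\|_1=O(\log\ell_n)$ (the Lebesgue constant), and $\|\partial_tK_n^D(x,\cdot)\|_E=O(\ell_n^2)$. The reproducing identity $\langle K_n^D(x_1,\cdot),K_n^D(x_2,\cdot)\rangle_2=K_n^D(x_1,x_2)$ on the torus then yields, after normalization, $\sigma(x_1,x_2)=\mathbf{1}\{x_1=x_2\}$, so $\Sigma_{(x_1,\dots,x_p)}=I_p$ whenever the $x_j$ are distinct, which explains the limit law in (\ref{TriKern}). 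From this, (K.1) and (K.3) are immediate, (K.4) and (K.7) reduce to $\ell_n/k_n\to 0$ and $k_n^{1/2}\ell_n^{1/2}/n\to 0$ (consequences of (ii) and (iii)), and (K.2), after bounding $\Gamma_{n,r}(x)\le C\ell_n^2/k_n$, reduces to $\ell_n\log(\ell_n)/k_n\to 0$, which is again (ii).

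Next, the $C^2$ regularity of $f$ (with periodic boundary values of $f$ and $f'$) gives $\omega_n=O(1/k_n)$ and $\delta_n=O(k_n^{-2})$; (H.1) is precisely (i), and $n\delta_n\to 0$ is a consequence of (iv). The core analytic step is the control of the bias terms appearing in (K.5) and (K.6). A Jackson-type estimate for the Dirichlet partial sum of the $C^2$ periodic function $f^2$ yields $\Psi_n(x)=O(\log(\ell_n)/\ell_n^2)$, while a Taylor expansion of $f^2$ about the midpoint $c_r$ of each $I_{n,r}$ annihilates the linear term by symmetry of the partition and, combined with the derivative bound on $K_n^D(x,\cdot)$, delivers a comparable bound on $\Xi_n$. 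Once these estimates are in hand, (K.5) becomes (iv) and (K.6) becomes (v), so Theorem~\ref{TCLpol}, combined with Corollary~\ref{coromain} to substitute $\hat f_n^{\mathrm{pol}}$ for $f$ in the normalizing factor, delivers (\ref{TriKern}). Finally, plugging $\ell_n=n^{10/27}$ and $k_n=n^{14/27}(\log n)^{2/7}u_n^2$ into (i)--(v) verifies all conditions and gives $v_n=n(\ell_nk_n)^{-1/2}=n^{5/9}(\log n)^{-1/7}u_n^{-1}$.

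The main obstacle is the sharp estimate on $\Xi_n$: the naive bound obtained by placing $\|\partial_tK_n^D\|_E=O(\ell_n^2)$ outside the integral is too coarse to match (iv), so one must exploit the cancellation produced when the sawtooth-type function $(t-c_r)(f^2)'(c_r)$ is integrated against the oscillating Dirichlet kernel on each $I_{n,r}$ and summed over $r$. This is the one place where the choice of a \emph{symmetric} equidistant partition and the specific form of $K_n^D$ must be used jointly, whereas the rest of the argument is essentially bookkeeping of the norm estimates above.
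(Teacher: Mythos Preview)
Your overall strategy coincides with the paper's: apply Theorem~\ref{TCLpol} (with $d=2$, the equidistant partition and the Dirichlet kernel), verify (H.1), (H.2), (K.1)--(K.7) from (i)--(v), identify $\Sigma=I_p$ via the reproducing property of $K_n^D$, and use Corollary~\ref{coromain} to replace $f$ by $\hat f_n^{\mathrm{pol}}$ in the normalisation. The paper's own proof is essentially a two-line citation: it invokes Menneteau~(2007), Corollary~3.8, for the implication (i)--(v) $\Rightarrow$ (H.1),(H.2),(K.1)--(K.6), and then checks separately that (K.7) follows from (i) and (ii) together with $\|K_n^D(x,\cdot)\|_2=(\ell_n+1)^{1/2}$. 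What you are doing is reconstructing the content of that reference, which is fine, but be aware that the technical verifications you sketch are not in the present paper.

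A couple of inaccuracies to fix. First, (K.7) comes from (i) and (ii): $k_n^{1/2}\ell_n^{1/2}/n=(k_n/n)^{1/2}(\ell_n/n)^{1/2}$, and each factor tends to $0$ by (i) and by (ii) combined with (i); your attribution to (iii) gives a lower, not an upper, bound. Second, your claim that $\Xi_n$ enjoys a bound ``comparable'' to $\Psi_n=O(\log\ell_n/\ell_n^2)$ does not match condition~(iv). Plugging $\Xi_n\asymp\log\ell_n/\ell_n^2$ into (K.5) yields $n k_n^{-1/2}\ell_n^{-5/2}\log\ell_n$, whereas (iv) reads $n k_n^{-5/2}\ell_n^{1/2}\log\ell_n=o(1)$; these are different quantities. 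The bound on $\Xi_n$ that actually reproduces (iv) is of the type $\Xi_n=O(\ell_n\log\ell_n/k_n^2)$, and it is this that the sawtooth/cancellation argument must deliver. Your idea---freeze $K_n^D(x,\cdot)$ at the midpoint to kill the odd part, then exploit that the sawtooth has no Fourier modes below frequency $k_n$ while the Dirichlet kernel only sees modes up to $\ell_n/2<k_n$---is the right mechanism, but the target rate and its link to (iv) should be stated correctly. Similarly, $n\delta_n\to 0$ (needed for (H.2)) follows from (ii) and (v), not from (iv).
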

Since our estimator is based on extreme values, it reaches an asymptotic
convergence rate larger than the classical parametric rate $n^{1/2}$.
At the opposite, estimators built on nonparametric regression techniques
would be limited to convergence rates lower than $n^{1/2}$.
As an example, the optimal convergence rate for estimating $C^2$ regression
functions is $n^{2/5}$~(Stone (1982)).

\subsection{Numerical experiments}

\noindent To conclude, we propose a simple illustration of the behavior
of the estimator $\hat{f}^{\mathrm{pol}}_n$ on a finite sample
situation.   The true frontier function is the $\pi/3$- periodic function
$$
f(x)=1+\exp(-\cos(3 x)),\;x\in[0,2\pi).
$$

\noindent The experiment involves several steps:

-- First, $m=100$ replications of a Poisson process (situation (P))
are simulated with $c=1/\int_0^{2\pi}f(x)dx$ and $n=100$.

-- For each of the $m$ previous set of points, the trigonometric
estimator $\hat{f}_{n}^{\mathrm{pol}}$ is computed with
$h_n=15$ and $k_n=20$.

-- The $m$ associated $L_1$ distances to $f$ are evaluated on a grid.  

-- Finally, the best situation
({\it i.e.} the estimation corresponding to the smallest $L_1$ error)
is represented on Figure~\ref{figsim} and the worst situation
({\it i.e.} the estimation corresponding to the largest $L_1$ error)
is represented on Figure~\ref{figsim2}.

\noindent The results are visually satisfying. More precisely, denoting
by $\xi_n$ the relative $L_1-$ error defined by
$$
\xi_n=\frac{\int_0^{2\pi} |\hat{f}_{n}^{\mathrm{pol}}(x)-f(x)|dx}{\int_0^{2\pi} f(x)dx},
$$
the maximum observed value of $\xi_n$ is $9.6\%$ (corresponding to
Figure~\ref{figsim2}), the minimum observed value is $3.8\%$
(corresponding to Figure~\ref{figsim}) and the mean value is $6.2\%$.

\section {Proofs}
\label{secproofs}

\subsection{Proofs of section~\ref{asymp}}

The proofs of Theorem~\ref{TCLvarphi} and Theorem~\ref{devvarphi}
follow the same lines. They are based on results of Menneteau~(2007)
for homogeneous processes and on an approximation argument. 

1. First, we show in Lemma~\ref{lemKing} that one can associate to
$N_n$ a homogeneous process thanks to a convenient transformation.  
More precisely, let $(\Pi_n)_{n\geq 1}$ denote the sequence of
counting processes defined by
\[
\Pi_{n}:D\in\mathcal{E}\otimes\mathcal{B}\left(  \mathbb{R}\right)
\mapsto\#\left\{  \left(  X_{n,i},\Phi_{X_{n,i}}\left(
Y_{n,i}\right) \right)  \in D\right\}  .
\]
\begin{lemma}
\label{lemKing}
Suppose {\rm ($\Phi$)} holds. Then, 
in situation (P) (resp. (E)),  $\Pi_{n}$ is associated
with a Poisson (resp. an empirical) process on $E\times\mathbb{R}^{+},$ with mean measure
$
nc\;\mathbf{1}_{G}\left(  x,v\right)  \;\nu(dx)\;dv\,,
$
where
\begin{equation}
G=\{(x,v):x\in E\;;\;0\leq v\leq~g(x)\}.
\end{equation}
\end{lemma}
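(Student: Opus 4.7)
The plan is to reduce both cases (P) and (E) to a single change-of-variables computation for the mean measure, then invoke the standard mapping theorem for Poisson processes in case (P) and the corresponding i.i.d.\ transformation argument in case (E). The transformation to consider is $T:(x,y)\in S\mapsto(x,\Phi_x(y))\in G$. Under assumption $(\Phi)$, for each fixed $x\in E$ the function $y\mapsto \Phi_x(y)$ is absolutely continuous, strictly increasing on $[0,f(x)]$ (because $\varphi(x,\cdot)>0$ a.e.\ on the slice of $S$), and satisfies $\Phi_x(0)=0$, $\Phi_x(f(x))=g(x)$. Hence $\Phi_x$ is a measurable bijection $[0,f(x)]\to[0,g(x)]$, and $T$ is a measurable bijection $S\to G$ (joint measurability in $(x,y)$ coming from the continuity of $\varphi$ on $\overset{o}{S}$ and Fubini).

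The key computation is the pushforward of the mean measure. For any measurable rectangle $A\times B\subset E\times\mathbb{R}^+$, Fubini and the change of variable $v=\Phi_x(y)$, $dv=\varphi(x,y)\,dy$ (applied on each slice for fixed $x$) give
\[
\int_{T^{-1}(A\times B)} nc\,\varphi(x,y)\,\mathbf{1}_S(x,y)\,\nu(dx)\,dy
 = nc\int_A \nu(dx)\int_0^{f(x)} \mathbf{1}_B(\Phi_x(y))\,\varphi(x,y)\,dy
\]
\[
 = nc\int_A \nu(dx)\int_0^{g(x)} \mathbf{1}_B(v)\,dv
 = \int_{A\times B} nc\,\mathbf{1}_G(x,v)\,\nu(dx)\,dv.
\]
By a monotone class argument, this identity extends to every $D\in\mathcal{E}\otimes\mathcal{B}(\mathbb{R}^+)$, so the image of the mean measure of $N_n$ under $T$ is exactly $nc\,\mathbf{1}_G(x,v)\,\nu(dx)\,dv$.

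Since $\Pi_n(D) = N_n(T^{-1}(D))$ by definition, the conclusion now follows in each case from well-known invariance properties. In situation (P), the mapping theorem for Poisson point processes (see e.g.\ Kingman) gives that $\Pi_n$ is Poisson on $E\times\mathbb{R}^+$ with the stated mean measure. In situation (E), $N_n$ is the empirical process associated with $n$ i.i.d.\ draws from the probability density $c\,\varphi\,\mathbf{1}_S$ with respect to $\nu\otimes\lambda$; applying $T$ to these i.i.d.\ draws produces $n$ i.i.d.\ variables with density $c\,\mathbf{1}_G$, so $\Pi_n$ is itself an $n$-sample empirical point process with the claimed mean measure.

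The only mildly delicate point is making sure the slicewise change of variables is legitimate: this requires strict monotonicity of $\Phi_x$ on $[0,f(x)]$, which needs $\varphi(x,\cdot)>0$ $\lambda$-a.e.\ on that interval for $\nu$-a.e.\ $x$. This is exactly what the ``positive almost everywhere on $S$'' clause in $(\Phi)$ provides (via Fubini applied to the set $\{\varphi=0\}\cap S$), so no further hypothesis is needed. Everything else is bookkeeping.
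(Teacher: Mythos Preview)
Your proof is correct and follows essentially the same approach as the paper: the paper's own argument simply cites the Mapping Theorem from Kingman for case (P) and a standard change-of-variable theorem from Cohn for case (E), which is exactly what you invoke, except that you have helpfully written out the pushforward computation and the monotonicity verification in full.
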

\begin{proof}
In situation (P), the result follows from the Mapping Theorem (see 
Kingman~(1993), p.~18).
In situation (E), the result is obtained by a simple change of variable
(see Cohn~(1980), Theorem~6.1.6).
\end{proof}

2.  As previously remarked in section~\ref{estima}, asymptotic results were
already established for homogeneous processes.  
For convenience of notation, we write $\hat c_{1,n}(x)=\hat c_n$ and
$\hat f_n(x)=\hat f_n(x;\hat c_n)$.  
Following~(\ref{estimennet}), we define for $x\in E$ : 
$$
\hat g_n(x)=\hat{g}_{n}\left( x;\hat c_n\right)     =\Phi_{x}(\hat{f}_{n}(x))  
=\sum_{r=1}^{k_{n}}%
\nu_{n,r}\kappa_{n,r}(x)\left(   \Phi_{X_{n,r}^*}(Y_{n,r}^{\ast} )
+\frac{1}{n\hat c_n(x)\nu_{n,r}}\right) 
$$
an estimator of $g(x)$, the frontier of the homogeneous process.     
Therefore, one can apply to $\hat g_n$ the following results, proved in 
Menneteau~(2007), which assert that
Theorem~\ref{TCLvarphi} and
Theorem~\ref{devvarphi} hold with $\varphi=1$.  
\begin{proposition} 
\label{propdebut}
i) Let $\varepsilon_n=1$ and suppose $(\Phi)$, $\left(
\mathrm{H.1}\right)  $-$\left(  \mathrm{H.6}\right)$ are verified.   
Let $\hat c_{1,n}$ and $\hat c_{2,n}$ verifying respectively $(\mathrm{C.1})$
and $(\mathrm{C.2})$. Then,
for all $\left(
x_{1},...,x_{p}\right)  \subset E,$%
\[
\left\{  \frac{n\hat c_{2,n}(x_j)}
{\kappa_{n}\left(  x_{j}\right)  }\left(  \hat{g}_{n}\left(  x_{j};\hat c_{1,n}\right)
-g\left(  x_{j}\right)  \right)  :1\leq j\leq p\right\}  \underset
{\mathcal{D}}{\rightarrow}N\left(  0,\Sigma_{\left(  x_{1},...,x_{p}\right)
}\right). 
\]
ii)
Let $\varepsilon_n\downarrow 0$ and 
suppose $(\Phi)$, $\left( \mathrm{H.1}\right)  $-$\left(  \mathrm{H.6}\right)$ are verified.  
Let $\hat c_{1,n}$ and $\hat c_{2,n}$ verifying respectively $(\mathrm{C.1})$
and $(\mathrm{C.2})$. 
For all $\left(
x_{1},...,x_{p}\right)  \subset E$
such that $\Sigma_{\left(  x_{1},...,x_{p}\right)}$ is regular,
the sequence of random vectors 
\[
\left\{  \frac{\varepsilon_n^{1/2}n\hat c_{2,n}(x_j)}%
{\kappa_{n}\left(  x_{j}\right)  }\left(  \hat{g}_{n}\left(  x_{j};\hat c_{1,n}\right)
-g\left(  x_{j}\right)  \right)  :1\leq j\leq p\right\} 
\]
follows the large deviations principle in $\mathbb{R}^p$ with speed $(\varepsilon_n)$
and good rate function $I$.

\end{proposition}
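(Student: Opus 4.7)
My overall strategy is to exploit the fact that when $\varphi \equiv 1$ the transformation $\Phi_x$ is the identity, so $g=f$ and $\hat{g}_n(x)$ is literally a linear combination of cell maxima $Y_{n,r}^{\ast}$ plus the explicit debiasing term $\kappa_n(x)(n\hat c_n(x))^{-1}\sum_r w_{n,r}(x)$. I would therefore write $\hat g_n(x)-g(x) = S_n(x) + R_n(x) + B_n(x)$, where $S_n(x)=\sum_r \nu_{n,r}\kappa_{n,r}(x)\bigl(Y_{n,r}^{\ast}-\mathbb{E}[Y_{n,r}^{\ast}]\bigr)$ is the centered stochastic term, $R_n(x)=(n\hat c_n(x))^{-1}\sum_r \kappa_{n,r}(x) -$ (its expected counterpart under the true $c$) captures the effect of estimating $c$, and $B_n(x)$ is a deterministic bias that I expect to collapse to $g_n(x)-g(x)$ plus $o(\kappa_n(x)/n)$. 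The bulk of the work is then to prove (a) $B_n(x) = o(\kappa_n(x)/(n\varepsilon_n^{1/2}))$, (b) a joint CLT/MDP for $nS_n/\kappa_n(x)$ evaluated at the $x_j$'s, and (c) that $R_n(x)$ is negligible at the relevant speed, using (C.1) and (C.2).

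The central computation concerns the distribution of $Y_{n,r}^{\ast}$. Thanks to Lemma~\ref{lemKing} (applied in the degenerate case $\Phi_x=\mathrm{id}$), one can condition on $N_{n,r}$ and then, in situation (P), the $N_{n,r}$ are independent Poisson while in (E) they form a single multinomial vector. In either case, given $N_{n,r}=N$, the points have density proportional to $\mathbf{1}_{D_{n,r}}$, so $M_{n,r}-Y_{n,r}^{\ast}$, rescaled by $nc\nu_{n,r}$, is approximately $\mathrm{Exp}(1)$. A Taylor expansion yields $\mathbb{E}[Y_{n,r}^{\ast}] = \nu_{n,r}^{-1}\int_{I_{n,r}} g\,d\nu - (nc\nu_{n,r})^{-1}+O((nc\nu_{n,r})^{-2})$ and $\mathrm{Var}(Y_{n,r}^{\ast})= (nc\nu_{n,r})^{-2}(1+o(1))$, using (H.1)--(H.2) to control the approximation error uniformly in $r$ and the oscillation term $\delta_n$; assumption (H.6) ensures that the $O((nc\nu_{n,r})^{-2})$ remainders, once summed with weights $\nu_{n,r}\kappa_{n,r}(x)$, are of order $o(\kappa_n(x)/(n\varepsilon_n^{1/2}))$. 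Combined with (H.5), this shows $B_n(x)=o(\kappa_n(x)/(n\varepsilon_n^{1/2}))$.

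For part (i), once the problem is reduced to the linear form $\tfrac{n c}{\kappa_n(x)}S_n(x) = \sum_r w_{n,r}(x)\,Z_{n,r}(x)$ with $Z_{n,r}(x) := nc\nu_{n,r}(Y_{n,r}^{\ast}-\mathbb{E}[Y_{n,r}^{\ast}])$ and $\sum_r w_{n,r}^2(x)=1$, I would invoke the Lindeberg--Feller theorem: the $Z_{n,r}$ are (conditionally on counts) approximately centered unit-variance exponentials with uniformly bounded higher moments, and (H.4) gives $\max_r |w_{n,r}(x)| \to 0$, which yields the Lindeberg condition. Joint convergence follows from (H.3) via the Cramér--Wold device. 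The empirical case (E) introduces mild dependence between cells through the fixed total number of points, but this can be removed by a Poissonisation argument. For part (ii), I would replace the CLT by an MDP for triangular arrays of independent centered variables with tails dominated by exponentials (standard Mogulskii/Puhalskii-type results), where (H.4) at speed $\varepsilon_n$ provides the truncation condition $\varepsilon_n^{-1/2}\max_r|w_{n,r}|\to 0$. The quadratic rate function $\tfrac12 u\Sigma^{-1}\,{}^tu$ comes out of the Gaussian approximation to $\log\mathbb{E}\exp(\lambda Z_{n,r})$ up to order $\lambda^2$.

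The substitution of $c$ is then routine: (C.1) is precisely the exponential version of Slutsky's lemma needed to merge $R_n(x)$ into the bias at speed $\varepsilon_n$, and (C.2) combined with the continuous mapping / exponential equivalence principle lets one replace $c$ by $\hat c_{2,n}(x_j)$ in the normalising constant. The main obstacle, in my view, is not the CLT but the MDP step: one needs uniform exponential moment control on $Z_{n,r}$ with truncation at level $\varepsilon_n^{-1/2}$, and one must verify that the $O((nc\nu_{n,r})^{-2})$ remainder terms in the moment expansion of $Y_{n,r}^{\ast}$ remain exponentially negligible under the speed $\varepsilon_n$, which is exactly what forces the strengthened version $(n\nu_n)^{-1}\max(\log n,\varepsilon_n^{-1})\to 0$ in (H.1) and the factor $\varepsilon_n^{1/2}$ in (H.5)--(H.6). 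A secondary difficulty is the empirical case (E), where the Poissonisation must be done carefully enough not to destroy the logarithmic scale required by the MDP.
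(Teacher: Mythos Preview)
The paper does not actually prove Proposition~\ref{propdebut}: it is imported verbatim from Menneteau~(2007), with the single sentence ``Therefore, one can apply to $\hat g_n$ the following results, proved in Menneteau~(2007)\dots''. So the paper's ``proof'' is a citation, nothing more. Your proposal, by contrast, is a genuine sketch of the underlying argument, and it follows what is indeed the standard route in that reference: reduce to the homogeneous case via Lemma~\ref{lemKing}, decompose $\hat g_n - g$ into a centered stochastic part, a deterministic bias, and a plug-in error for $c$; control the bias through (H.2), (H.5), (H.6); get the CLT from a Lindeberg--Feller argument with (H.3)--(H.4); upgrade to an MDP via triangular-array results of Mogulskii/Puhalskii type; and finish with (C.1)--(C.2) and Slutsky/exponential-equivalence.

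Two small points where your sketch is slightly loose. First, the cell maximum $\Phi_{X_{n,r}^{\ast}}(Y_{n,r}^{\ast})$ is the maximum of the second coordinate of uniform points in a cell whose upper edge is the \emph{non-constant} function $g$ on $I_{n,r}$; your formula $\mathbb{E}[Y_{n,r}^{\ast}]=\nu_{n,r}^{-1}\int_{I_{n,r}}g\,d\nu-(nc\nu_{n,r})^{-1}+O((nc\nu_{n,r})^{-2})$ is only an approximation, and the passage from the true distribution to this approximation is exactly where the oscillation quantity $\delta_n$ in (H.2) and the second-order control in (H.6) enter --- (H.6) is about the $(n\delta_n)^2$ error coming from the non-flat top, not about the $O((nc\nu_{n,r})^{-2})$ exponential remainder you attribute to it. Second, in situation (E) the Poissonisation has to be done at the level of the MDP (exponential equivalence of the empirical and Poissonised processes at speed $\varepsilon_n$), which is precisely one of the technical contributions of Menneteau~(2007); your sketch flags this correctly as the delicate step. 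None of this invalidates your plan; it just means the bookkeeping of which hypothesis kills which remainder needs to be tightened if you write it out in full.
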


3.
We now derive the asymptotic behavior of $\hat f_n$ from that of $\hat g_n$
by an approximation argument given in the next lemma.
\begin{lemma}
\label{lemmapp}
If $\left( \mathrm{H.1}\right)  $-$\left(  \mathrm{H.7}\right)$ hold  
and $\hat c_n$ verifies $(\mathrm{C.1})$, then
\begin{equation}
\label{majder2}
\mathop{\lim\sup}_{n\to\infty}\varepsilon_n\log P\left(
\varepsilon_n^{1/2} \frac{nc}{\kappa_n(x)}
\left|\varphi(x,f(x))(\hat{f}_n(x;\hat c_n)-f(x))-(\hat{g}_n(x;\hat c_n)-g(x))\right|\geq\eta\right)=-\infty.  
\end{equation}
\end{lemma}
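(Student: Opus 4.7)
The plan is to reduce the assertion to the moderate deviations principle for $\hat g_n$ obtained in Proposition~\ref{propdebut}(ii) via a Taylor expansion of the map $\Phi_x$ about $y=f(x)$. Since by construction $\hat g_n(x;\hat c_n)=\Phi_x(\hat f_n(x;\hat c_n))$ and $g(x)=\Phi_x(f(x))$, the quantity $\varphi(x,f(x))(\hat f_n-f)-(\hat g_n-g)$ is exactly the quadratic remainder in the linearization $\Phi_x(\hat f_n)-\Phi_x(f)\approx\varphi(x,f(x))(\hat f_n-f)$. The case where $\varphi$ is constant (first branch of $(\mathrm{H.7})$) is immediate because the linearization is then exact. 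So I shall assume the second branch, which supplies the key scale factor $\alpha_n:=\kappa_n(x)/(n\varepsilon_n^{1/2})\to 0$.

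First, I would use $(\Phi)$ and Remark~\ref{remtwo} to fix $\delta,C>0$ (depending on $x$) such that for $|y-f(x)|\le\delta$ one has both $|\varphi(x,y)-\varphi(x,f(x))|\le C|y-f(x)|$ and $\varphi(x,y)\ge\varphi(x,f(x))/2$. On the localization event $A_n:=\{|\hat f_n-f|\le\delta\}$, writing $\hat g_n-g=\int_{f(x)}^{\hat f_n(x)}\varphi(x,t)\,dt$ and applying the mean value theorem would yield simultaneously
$$|\varphi(x,f(x))(\hat f_n-f)-(\hat g_n-g)|\le\tfrac{C}{2}(\hat f_n-f)^2\quad\text{and}\quad|\hat f_n-f|\le\tfrac{2}{\varphi(x,f(x))}|\hat g_n-g|,$$
whence the pointwise inequality $|\varphi(x,f(x))(\hat f_n-f)-(\hat g_n-g)|\le C_1(\hat g_n-g)^2$ on $A_n$, with $C_1=2C/\varphi(x,f(x))^2$.

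Setting $W_n=(nc/\kappa_n(x))(\hat g_n-g)$, Proposition~\ref{propdebut}(ii) together with $(\mathrm{C.1})$ provides the Gaussian MDP for $\varepsilon_n^{1/2}W_n$ at speed $(\varepsilon_n)$. The previous bound then gives, on $A_n$,
$$\varepsilon_n^{1/2}\tfrac{nc}{\kappa_n(x)}\bigl|\varphi(x,f(x))(\hat f_n-f)-(\hat g_n-g)\bigr|\le C_1 c\,\alpha_n\,(\varepsilon_n^{1/2}W_n)^2,$$
so the event $\{\text{LHS}\ge\eta\}\cap A_n$ forces $|\varepsilon_n^{1/2}W_n|\ge R_n:=(\eta/(C_1 c\alpha_n))^{1/2}$, with $R_n\to+\infty$ by $(\mathrm{H.7})$. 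Applying the MDP upper bound to the closed sets $\{|u|\ge R\}$ with $R$ arbitrarily large then gives $\limsup_n\varepsilon_n\log P(|\varepsilon_n^{1/2}W_n|\ge R_n)=-\infty$. The complementary set $A_n^c$ is handled in the same spirit: continuity of $\Phi_x^{-1}$ at $g(x)$ (valid since $\varphi(x,f(x))>0$) yields $\{|\hat f_n-f|>\delta\}\subset\{|\hat g_n-g|>\eta'\}$ for some $\eta'>0$, which again translates into $|\varepsilon_n^{1/2}W_n|\ge\eta' c/\alpha_n\to+\infty$ and is absorbed by the MDP upper bound.

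The main obstacle is conceptually minor but technically delicate: one must keep track of the moderating factor $\alpha_n\to 0$ delivered by $(\mathrm{H.7})$, which is exactly what is needed to beat the quadratic loss from Taylor's formula at the MDP scale. Without it the remainder would be of the same order as the linear part. Morally the argument is a second-order delta method at the level of moderate deviations, and the bookkeeping between the speed $\varepsilon_n$, the normalizing factor $n/\kappa_n(x)$ and the Gaussian tails of $\varepsilon_n^{1/2}W_n$ has to be executed carefully.
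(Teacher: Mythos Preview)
Your argument is correct and follows essentially the same second-order delta-method as the paper. The only difference is cosmetic: the paper Taylor-expands $\Phi_x^{-1}$ about $g(x)$, so the remainder appears directly as $\tfrac12(\Phi_x^{-1})''(h_n)(\hat g_n-g)^2$ and is bounded on the shrinking window $\{nc\varepsilon_n^{1/2}\kappa_n(x)^{-1}|\hat g_n-g|<\alpha\}$, after which one lets $\alpha\to\infty$; you instead expand $\Phi_x$ about $f(x)$ and then convert $(\hat f_n-f)^2$ into $(\hat g_n-g)^2$ via the lower bound $\varphi\ge\varphi(x,f(x))/2$ on the fixed window $A_n$, arriving at a moving threshold $R_n\to\infty$. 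Both routes reduce the bad event to $\{|\varepsilon_n^{1/2}W_n|\ge R\}$ with $R$ arbitrarily large and conclude through Proposition~\ref{propdebut}. Two minor points: your constant should read $(C_1/c)\,\alpha_n$ rather than $C_1 c\,\alpha_n$ (harmless), and in the case $\varepsilon_n\equiv 1$ you should quote Proposition~\ref{propdebut}(i), where tightness of $W_n$ (rather than the MDP upper bound) already yields $P(|W_n|\ge R_n)\to 0$.
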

\begin{proof}
The result is straightforward if $\varphi$ is a constant function.
We thus focus on the case where, by (H.7),  
\begin{equation}
\label{eqtempo}
\varepsilon_n^{-1/2}\kappa_n(x)/n\to 0
\end{equation}
as $n\to\infty$.
For all $x\in E$, there exists $h_n(x)\in [\min(g(x),\hat{g}_n(x)),
\max(g(x),\hat{g}_n(x))]$, such that
\begin{eqnarray*}
\hat{f}_n(x)-f(x) &=& \Phi_x^{-1}(g(x)+(\hat{g}_n(x)-g(x))) - \Phi_x^{-1}(g(x))\\
&=& (\Phi_x^{-1})'(g(x)) (\hat{g}_n(x)-g(x)) +
\frac{1}{2}(\Phi_x^{-1})''(h_n(x)) (\hat{g}_n(x)-g(x)) ^2.
\end{eqnarray*}
Remarking that $ (\Phi_x^{-1})'(g(x))=1/\varphi(x,f(x))$, we obtain that
\begin{equation}
\label{eqdev2}
\varphi(x,f(x))(\hat{f}_n(x)-f(x) )=  (\hat{g}_n(x)-g(x)) 
+ \frac{1}{2}\varphi(x,f(x)) (\Phi_x^{-1})''(h_n(x)) (\hat{g}_n(x)-g(x)) ^2.
\end{equation}
Set $\eta>0$ and for all $\alpha>0$ introduce 
\begin{eqnarray*}
I_{n}(x,\alpha,\varphi)&=&\left]g(x)-\alpha \frac{\kappa_n(x)}{nc\varepsilon_n^{1/2}},
g(x)+\alpha \frac{\kappa_n(x)}{nc\varepsilon_n^{1/2}}\right[,\\
M_n(x,\alpha,\varphi)&=&
\sup_{u\in I_n(x,\alpha,\varphi) }
\left|(\Phi_x^{-1})''(u)\right|\\
&=&
\sup_{u\in I_n(x,\alpha,\varphi) }
\frac{\left|\partial \varphi(x,\Phi_x^{-1}(u)) /
\partial y\right|}{\varphi^3(x,\Phi_x^{-1}(u))}.
\end{eqnarray*}
From (\ref{eqdev2}), and since $|h_n(x)-g(x)|\leq |\hat g_n(x)-g(x)|$,
it follows that
\begin{eqnarray*}
&& \varepsilon_n^{1/2}\frac{nc}{\kappa_n(x)}
\left|\varphi(x,f(x))(\hat{f}_n(x)-f(x))-(\hat{g}_n(x)-g(x))\right|
1_{\{\frac{nc\varepsilon_n^{1/2}}{\kappa_n(x)}|\hat g_n(x)-g(x)|<\alpha\}}\\
&\leq &\frac{\kappa_n(x)}{2nc\varepsilon_n^{1/2}} \varphi(x,f(x)) M_n(x,\alpha,\varphi)
\left(\frac{nc\varepsilon_n^{1/2}}{\kappa_n(x)}|\hat g_n(x)-g(x)|\right)^2
1_{\{\frac{nc\varepsilon_n^{1/2}}{\kappa_n(x)}|\hat g_n(x)-g(x)|<\alpha\}}\\
&\leq &\frac{\kappa_n(x)}{2nc\varepsilon_n^{1/2}} \varphi(x,f(x)) M_n(x,\alpha,\varphi) \alpha^2\\
&<&\eta,
\end{eqnarray*}
eventually, since from (\ref{eqtempo}),
$I_n(x,\alpha,\varphi)\to \{g(x)\}$ as $n\to\infty$ and thus
$$
M_n(x,\alpha,\varphi) \to \frac{\left|\partial \varphi(x,f(x)) /
\partial y\right|}{\varphi^3(x,f(x))}.
$$
Consequently, for all large $\alpha$
\begin{eqnarray*}
&&\mathop{\lim\sup}_{n\to\infty}\varepsilon_n\log P\left(
\varepsilon_n^{1/2} \frac{nc}{\kappa_n(x)}
\left|\varphi(x,f(x))(\hat{f}_n(x)-f(x))-(\hat{g}_n(x)-g(x))\right|\geq\eta\right)
\\
&\leq& \mathop{\lim\sup}_{n\to\infty}\varepsilon_n\log P\left(
\frac{nc\varepsilon_n^{1/2}}{\kappa_n(x)}|\hat g_n(x)-g(x)|\geq\alpha\right)\\
&\leq& -\frac{\alpha^2}{2\sigma^2(x)}
\end{eqnarray*}
where $\sigma^2(x)=\sigma(x,x) $ with Proposition~\ref{propdebut}.
Letting $\alpha\to\infty$ gives the result.  
\end{proof}

4. The proofs of the announced results are now straightforward:

\noindent\textbf{Proofs of Theorem~\ref{TCLvarphi} and Theorem~\ref{devvarphi}:} 
a) First, we prove the theorems for $\hat c_{2,n}(x)=c$.  
By Lemma~\ref{lemKing}, we can apply Proposition~\ref{propdebut}
to obtain the expected weak convergence
and moderate deviations principle for $(\hat g_n(x)-g(x))$. 
 From Lemma \ref{lemmapp}, $(\hat g_n(x)-g(x))$ and $\varphi(x,f(x))(\hat f_n(x)-f(x))$
share the same asymptotic behavior in terms of weak convergence
and moderate deviations principles.  \\
b) In the general case, it is sufficient to prove that
\begin{equation}
\label{butun}
\mathop{\lim\sup}_{n\to\infty}\varepsilon_n\log P\left(
\varepsilon_n^{1/2} \frac{n}{\kappa_n(x)}\left|\hat c_{n,2}(x)-c \right|
\left|\hat{f}_n(x)-f(x)\right|\geq\eta\right)=-\infty.  
\end{equation}
To this aim, observe that, for all large $\alpha>0$,
\begin{eqnarray*}
&&P\left(
\varepsilon_n^{1/2} \frac{n}{\kappa_n(x)}\left|\hat c_{n,2}(x)-c \right|
\left|\hat{f}_n(x)-f(x)\right|\geq\eta\right)\\
&\leq& P\left(\left|\hat c_{n,2}(x)-c \right|\geq c/\alpha \right)\\
&+&
P\left(
\varepsilon_n^{1/2} \frac{nc}{\kappa_n(x)}
\left|\hat{f}_n(x)-f(x)\right|\geq\eta\alpha\right).
\end{eqnarray*}
Thus, from (C.2) and the part a) of the proof,
\begin{eqnarray*}
&&\mathop{\lim\sup}_{n\to\infty}\varepsilon_n\log P\left(
\varepsilon_n^{1/2} \frac{n}{\kappa_n(x)}\left|\hat c_{n,2}(x)-c \right|
\left|\hat{f}_n(x)-f(x)\right|\geq\eta\right) \\
&\leq & \mathop{\lim\sup}_{n\to\infty}\varepsilon_n\log P\left(
\varepsilon_n^{1/2} \frac{nc}{\kappa_n(x)}
\left|\hat{f}_n(x)-f(x)\right|\geq\eta\alpha\right) \\
&\leq& -\frac{\alpha^2\eta^2}{2\sigma^2(x)}.
\end{eqnarray*}
Letting $\alpha\to\infty$ gives the intended result~(\ref{butun}).  
\ \rule{0.5em}{0.5em}

\noindent\textbf{Proof of Corollary \ref{coromain} and Corollary \ref{coromain2}:} Mimicking the
part b) of the proof of Theorem~\ref{TCLvarphi} and Theorem~\ref{devvarphi},
it is sufficient to prove that
\begin{equation}
\label{butdeux}
\mathop{\lim\sup}_{n\to\infty}\varepsilon_n\log P\left(
\left|\varphi(x,\hat{f}_n(x))-\varphi(x,f(x))\right|\geq\eta\right)=-\infty.  
\end{equation}
Since, from ($\Phi$), $\varphi$ is continuous at point $f(x)$, there exists $\delta>0$
such that 
$$
|y-f(x)|< \delta \Rightarrow |\varphi(x,f(x))-\varphi(x,\hat f_n(x))|< \eta.  
$$
Hence, for all large $\alpha>0$,
\begin{eqnarray*}
&&\mathop{\lim\sup}_{n\to\infty}\varepsilon_n\log P\left(
\left|\varphi(x,\hat{f}_n(x))-\varphi(x,f(x))\right|\geq\eta\right)\\
&\leq&  \mathop{\lim\sup}_{n\to\infty}\varepsilon_n\log P\left(
\varepsilon_n^{1/2} \frac{nc}{\kappa_n(x)}
\left|\hat{f}_n(x)-f(x)\right|\geq\delta\varepsilon_n^{1/2} \frac{nc}{\kappa_n(x)}
\right)\\
&\leq&  \mathop{\lim\sup}_{n\to\infty}\varepsilon_n\log P\left(
\varepsilon_n^{1/2} \frac{nc}{\kappa_n(x)}
\left|\hat{f}_n(x)-f(x)\right|\geq\alpha\right)\\
&\leq& -\frac{\alpha^2}{2\sigma^2(x)}.
\end{eqnarray*}
Letting $\alpha\to\infty$ gives the intended result~(\ref{butdeux}).  
\ \rule{0.5em}{0.5em}

\subsection{Proofs of section~\ref{secpolar}}

Proofs of Theorem~\ref{TCLpol} and Theorem~\ref{MDpol} both rely on the 
following lemma which permits to apply Theorem~\ref{TCLvarphi} and Theorem~\ref{devvarphi}. 
\begin{lemma}
\label{lemvarphi}
In situation (P) (resp. (E)), $\left(  X_{n,i},Y_{n,i}\right)  _{i\geq1}$ is a Poisson 
(resp. an empirical) process 
with mean measure $
n\;c \;\gamma_d \;y^{d-1}\;  \mathbf{1}_{S}\left(  x,y\right)\; h_d(x)\;dx\; dy$.
\end{lemma}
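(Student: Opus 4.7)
The plan is to realize this lemma as a straightforward change of variables through the polar map $P_d$, proceeding in parallel for the Poisson (P) and empirical (E) cases exactly as in the proof of Lemma~\ref{lemKing}.

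First, I would verify that $P_d$ is a measurable bijection from $E\times(0,+\infty)$ onto its image in $\mathbb{R}^d\setminus\{0\}$ (up to a Lebesgue-null set of degenerate angle configurations), and compute its Jacobian. By the classical spherical-coordinates computation (see e.g.\ Mardia \emph{et al.}~(1979), Section~2.4, or a direct induction on $d$ using the recursive factorization of $P_d$), one obtains
$$
|\det J_{P_d}(x,y)|=y^{d-1}\prod_{j=1}^{d-2}(\sin x_j)^{d-1-j}=y^{d-1}\prod_{j=1}^{d-1}(\sin x_j)^{d-1-j},
$$
the last equality holding because $(\sin x_{d-1})^0=1$. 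Comparing with the definitions of $\gamma_d$ and $h_d$ in the preceding paragraph, this Jacobian is exactly $\gamma_d\, y^{d-1}\, h_d(x)$. Moreover the image $P_d(S)=S^{\mathrm{pol}}$ by construction, so $\mathbf{1}_{S^{\mathrm{pol}}}(P_d(x,y))=\mathbf{1}_S(x,y)$.

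Second, in situation (P), $N_n^{\mathrm{pol}}$ is a Poisson process with mean measure $nc\,\mathbf{1}_{S^{\mathrm{pol}}}(u,v)\,du\,dv$. Since $(X_{n,i},Y_{n,i})=P_d^{-1}(U_{n,i},V_{n,i})$, Kingman's Mapping Theorem (Kingman~(1993), p.~18), applied with the measurable map $P_d^{-1}$, shows that $(X_{n,i},Y_{n,i})_{i\geq 1}$ is again a Poisson process whose mean measure is the pushforward of the original one. The change of variables formula then identifies this pushforward as
$$
nc\,\gamma_d\,y^{d-1}\,\mathbf{1}_S(x,y)\,h_d(x)\,dx\,dy,
$$
which is (\ref{eqmesurepol}). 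In situation (E), the same computation is done at the level of the common density of the i.i.d.\ points $(U_{n,i},V_{n,i})$, namely $c\,\mathbf{1}_{S^{\mathrm{pol}}}(u,v)$: the standard change-of-variables formula for densities (as invoked in Lemma~\ref{lemKing} via Cohn~(1980), Theorem~6.1.6) yields that $(X_{n,i},Y_{n,i})$ are i.i.d.\ with density $c\,\gamma_d\,y^{d-1}\,\mathbf{1}_S(x,y)\,h_d(x)$ on $E\times\mathbb{R}^+$, so that the associated $n$-sample empirical process has the announced mean measure.

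The only real obstacle is verifying the Jacobian formula cleanly for general $d$; once this is in hand, the rest is just mapping/pushforward bookkeeping. I would therefore either cite Mardia \emph{et al.} directly or write one explicit induction step (peeling off the last sine factor) to make the expression for $|\det J_{P_d}|$ completely transparent.
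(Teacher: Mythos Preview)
Your proposal is correct and follows exactly the same route as the paper's proof: compute the Jacobian of the polar map $P_d$ as $y^{d-1}\prod_{j=1}^{d-1}(\sin x_j)^{d-1-j}=\gamma_d\,y^{d-1}\,h_d(x)$, then invoke Kingman's Mapping Theorem in situation (P) and the change-of-variables formula (Cohn~(1980), Theorem~6.1.6) in situation (E). Your write-up is simply more explicit about the bijectivity of $P_d$ and the verification of the Jacobian formula, but the argument is identical.
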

\begin{proof} 
Note that the Jacobian of the inverse polar transformation $P_d^{-1}$ is
$$
J(x,y)= y^{d-1}\prod_{j=1}^{d-1}(\sin x_j)^{d-1-j}
=\gamma_d \;y^{d-1}\; h_d(x).
$$
Hence, in situation (P), the result follows from the Mapping Theorem
(see Kingman~(1993), p.~18).
In situation (E), the result is obtained by a change of variable
(see Cohn~(1980), Theorem~6.1.6). 
\end{proof}

\begin{lemma}
\label{lemme4}
With (\ref{eqmesurepol}), (\ref{choixnoyau1}) and (\ref{choixcn}), estimator~(\ref{defest}) can be
rewritten as
\[
\hat{f}_{n}^{\mathrm{pol}}(x)=\left(
\sum_{r=1}^{k_n}\left(\int_{I_{n,r}}K_{n}(x,t)\nu(dt)
+ \frac{\int_E K_n(x,t)\nu(dt)}{k_n N_{n,r}}\right)
\left( Y_{n,r}^{\ast}\right)  ^{d}\right)  ^{1/d}.
\]
\end{lemma}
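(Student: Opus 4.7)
The plan is to substitute the explicit forms of $\varphi$, $\nu_{n,r}$, $\kappa_{n,r}$ and $\hat c_n^{glo}$ dictated by~(\ref{eqmesurepol}),~(\ref{choixnoyau1}),~(\ref{choixcn}) and the equi-probable choice $\nu(I_{n,r})=1/k_n$ into the general formula~(\ref{defest}), and then to simplify algebraically until the right-hand side of the claimed identity emerges.

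First, I would record the closed forms induced by $\varphi(x,y)=\gamma_d y^{d-1}$, namely
$$\Phi_x(y)=\int_0^y \gamma_d t^{d-1}\,dt = \frac{\gamma_d}{d}\,y^d \quad\text{and}\quad \Phi_x^{-1}(u)=\left(\frac{du}{\gamma_d}\right)^{1/d},$$
and observe that, since $\nu_{n,r}=1/k_n$ for every $r$, the factor $\nu_{n,r}\kappa_{n,r}(x)$ appearing in~(\ref{defest}) collapses to $\int_{I_{n,r}}K_n(x,t)\nu(dt)$, while $1/(n\hat c_n^{glo}\nu_{n,r})=k_n/(n\hat c_n^{glo})$. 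Inserting the explicit expression for $\hat c_n^{glo}$ from~(\ref{choixcn}) yields $k_n/(n\hat c_n^{glo}) = (\gamma_d/(d\,k_n))\sum_{s=1}^{k_n}(Y_{n,s}^*)^d/N_{n,s}$.

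Then I would assemble the argument of $\Phi_x^{-1}$ in~(\ref{defest}) as a sum of two pieces: the piece coming from $\Phi_{X_{n,r}^*}(Y_{n,r}^*) = \gamma_d(Y_{n,r}^*)^d/d$ contributes $\frac{\gamma_d}{d}\sum_r (Y_{n,r}^*)^d\int_{I_{n,r}}K_n(x,t)\nu(dt)$, while the bias-correction piece factorizes as
$$\frac{\gamma_d}{d\,k_n}\left(\sum_{r=1}^{k_n}\int_{I_{n,r}}K_n(x,t)\nu(dt)\right)\left(\sum_{s=1}^{k_n}\frac{(Y_{n,s}^*)^d}{N_{n,s}}\right)=\frac{\gamma_d}{d\,k_n}\int_E K_n(x,t)\nu(dt)\sum_{r=1}^{k_n}\frac{(Y_{n,r}^*)^d}{N_{n,r}}.$$
Regrouping by $(Y_{n,r}^*)^d$ and applying $\Phi_x^{-1}$ cancels the $\gamma_d/d$ prefactor and produces the claimed expression after taking the $d$th root. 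This is essentially a bookkeeping exercise; the only point of care will be the convention $0\times\infty=0$ when $N_{n,r}=0$, where $(X_{n,r}^*,Y_{n,r}^*)=(0,0)$ makes the corresponding summands in both $\hat c_n^{glo}$ and in the argument of $\Phi_x^{-1}$ vanish consistently, so the manipulations go through without modification.
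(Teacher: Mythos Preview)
Your proposal is correct and follows essentially the same route as the paper's own proof: both substitute the explicit forms $\Phi_x(y)=\gamma_d y^d/d$, $\nu_{n,r}=1/k_n$, $\kappa_{n,r}(x)=k_n\int_{I_{n,r}}K_n(x,t)\nu(dt)$ and the expression~(\ref{choixcn}) for $\hat c_n^{glo}$ into~(\ref{defest}), then regroup the double sum by $(Y_{n,r}^*)^d$ and cancel the $\gamma_d/d$ factor via $\Phi_x^{-1}$. The paper presents the computation in two compact display lines without explicitly writing $\Phi_x$ and $\Phi_x^{-1}$, and it does not mention the $0\times\infty$ convention, but the substance is identical.
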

\noindent\textbf{Proof.}
From~(\ref{defest}) and (\ref{eqmesurepol}), 
$$
\hat{f}_{n}^{\mathrm{pol}}(x)=\left(
\overset{k_{n}}{\underset{r=1}{\sum}}\kappa_{n,r}(x)\left(
\nu_{n,r}  \left( Y_{n,r}^{\ast}\right)  ^{d}+ \frac{d}{\gamma_d}\frac{1}{ n\hat c_n}\right)\right)  ^{1/d}.
$$
Thus, taking account of (\ref{choixnoyau1}) and (\ref{choixcn}),
\begin{eqnarray*}
\hat{f}_{n}^{\mathrm{pol}}(x)&=&\left(
\sum_{r=1}^{k_n}\int_{I_{n,r}}K_{n}(x,t)\nu(dt) \left(
\left( Y_{n,r}^{\ast}\right)  ^{d}+ \frac{1}{k_n}\sum_{s=1}^{k_n}
 \frac{\left( Y_{n,s}^{\ast}\right) ^{d}}{N_{n,s}}\right)\right)  ^{1/d}\\
 &=&\left(
\sum_{r=1}^{k_n}\left(\int_{I_{n,r}}K_{n}(x,t)\nu(dt)
+ \frac{\int_E K_n(x,t)\nu(dt)}{k_n N_{n,r}}\right)
\left( Y_{n,r}^{\ast}\right)  ^{d}\right)  ^{1/d}.
\end{eqnarray*}
\ \rule{0.5em}{0.5em}

\noindent{\bf Proof of Theorem~\ref{TCLpol} and Theorem~\ref{MDpol}:}
First, Lemma~4.5 in Menneteau~(2007) shows that, under (H.1) and (H.2),
conditions (C.1) and (C.2) hold for $\hat c_n$ defined in~(\ref{choixcn}).
Second, in the proofs of Theorem~3.1 and Theorem~3.2 in Menneteau~(2007), it
is shown that conditions (H.1), (H.2), (K.1)-(K.6) imply conditions
(H.1)-(H.6) of Theorem~\ref{TCLvarphi} and Theorem~\ref{devvarphi} and 
that $\kappa_n(x)=k_n^{1/2} \|K_n(x,.)\|_2(1+o(1))$.
Thus, (K.7) implies (H.7).
\ \rule{0.5em}{0.5em}


\noindent{\bf Proof of Corollary~\ref{coropol}:}
From Menneteau~(2007), Corollary~3.8,
(i)--(v) imply (H.1), (H.2) and (K.1)-(K.6).
Moreover it is clear that (i), (ii) give (K.7) since,
by Tolstov~(1976),
$\|K_n^D(x,.)\|_2=(\ell_n+1)^{1/2}$. 
\ \rule{0.5em}{0.5em}

\newpage

\noindent{\large\bf References}
\begin{description}

\item{ Baillo, A. and Cuevas, A.} (2001)
{ On the estimation of a star-shaped set.}
{\em Adv. Appl. Proba.}, {\bf 33}(4), 717--726.

\item{ Chevalier J.} (1976)
{ Estimation du support et du contour d'une loi de probabilit\'e}
{\em Ann. Inst. H. Poincar\'e}, sect. B, {\bf 12}, 4, 339--364.

\item{Cohn, D.} (1980)
\textit{Measure theory.}
Birkh\"auser.

\item{Dembo, A. and Zeitouni, O. } (1993)
\textit{Large Deviations Techniques and Applications.}
Jones and Bartlett, Boston and London.

\item{ Deprins, D., Simar, L. and Tulkens, H.} (1984)
{ Measuring Labor Efficiency in Post Offices.} in
{\em The Performance of Public Enterprises: Concepts and Measurements} by
M. Marchand, P. Pestieau and H. Tulkens, North Holland ed, Amsterdam.

\item{ Devroye, L.P. and Wise, G.L.} (1980)
{ Detection of abnormal behavior via non parametric estimation of the support.}
{\em SIAM J. Applied Math.}, {\bf 38}, 448--480.

\item{ Geffroy, J.} (1964)
{ Sur un probl\`eme d'estimation g\'eom\'etrique.}
{\em Publications de l'Institut de Statistique de l'Universit\'e de Paris},
 {\bf XIII}, 191--200.

\item{ Gensbittel, M.H., (1979)} 
{\em Contribution \`a l'\'etude statistique de r\'epartitions ponctuelles
al\'eatoires. }
PhD Thesis, Universit\'e Pierre et Marie Curie, Paris.

\item { Girard, S. and Menneteau, L.} (2005)
Central limit theorems for smoothed extreme value estimates
of Poisson point processes boundaries.
\textit{Journal of Statistical Planning and Inference}, {\bf 135}, 433--460.

\item{ Hall, P., Park, B. U. and Stern, S. E.} (1998)
{ On polynomial estimators of frontiers and boundaries.}
{\em J. Multiv. Analysis}, {\bf 66}, 71--98.

\item{ Hardy A. and Rasson J. P.} (1982)
{ Une nouvelle approche des probl\`emes de classification automatique.}
{\em Statistique et Analyse des donn\'ees}, {\bf 7}, 41--56.

\item {Kallenberg, W.} (1983a)
Intermediate efficiency, theory and examples. 
\textit{Annals of Statistics}, {\bf 11}, 170--182. 

\item {Kallenberg, W.} (1983b)
On moderate deviation theory in estimation. 
\textit{Annals of Statistics}, {\bf 11}, 498--504. 

\item{Kingman, J.} (1993)
\textit{Poisson processes.}
Oxford Studies in Probability, {\bf 3}, Oxford: Clarendon Press.

\item{  Korostelev, A. P. and Tsybakov, A. B.} (1993)
{ Minimax theory of image reconstruction.}
{\em Lecture Notes in Statistics}, {\bf 82}, Springer Verlag, New-York.

\item{Mardia, K., Kent, J. and Bibby, J.} (1979)
\textit{Multivariate analysis.}
Academic Press London.

\item { Menneteau, L.} (2007)
Multidimensional limit theorems for smoothed extreme value estimates
of point process boundaries.
\textit{ESAIM: Probability and Statistics}, to appear.

\item{ Renyi, A. and Sulanke, R.} (1963)
{  Uber die konvexe H\"ulle von n zuf\"alligen gew\"ahlten Punkten.}
{\em Z. Wahrscheinlichkeitstheorie verw. Geb.} {\bf 2}, 75--84.

\item{ Stone, C.} (1982)
Optimal global rates of convergence for nonparametric regression, {\it{Annals of Statistics}}, {\bf{10}}, 1040--1053.

\item{Tolstov, G.P. } (1976)
\textit{Fourier Series.}
2nd ed., Dover Publications, New York.

\end{description}

\vskip .65cm
\noindent
$^{(1)}$ Team MISTIS, INRIA Rh\^one-Alpes, ZIRST, 655 avenue
de l'Europe, \\
38330 Montbonnot, France, {\tt Stephane.Girard@inrialpes.fr}
\vskip 2pt
\noindent
$^{(2)}$ EPS/I3M, Universit\'e Montpellier 2,
place Eug\`ene Bataillon, \\
34095 Montpellier cedex 5, France, {\tt mennet@math.univ-montp2.fr}

\begin{figure}[p]
\begin{minipage}{0.45\textwidth}
\centerline{\epsfig{figure=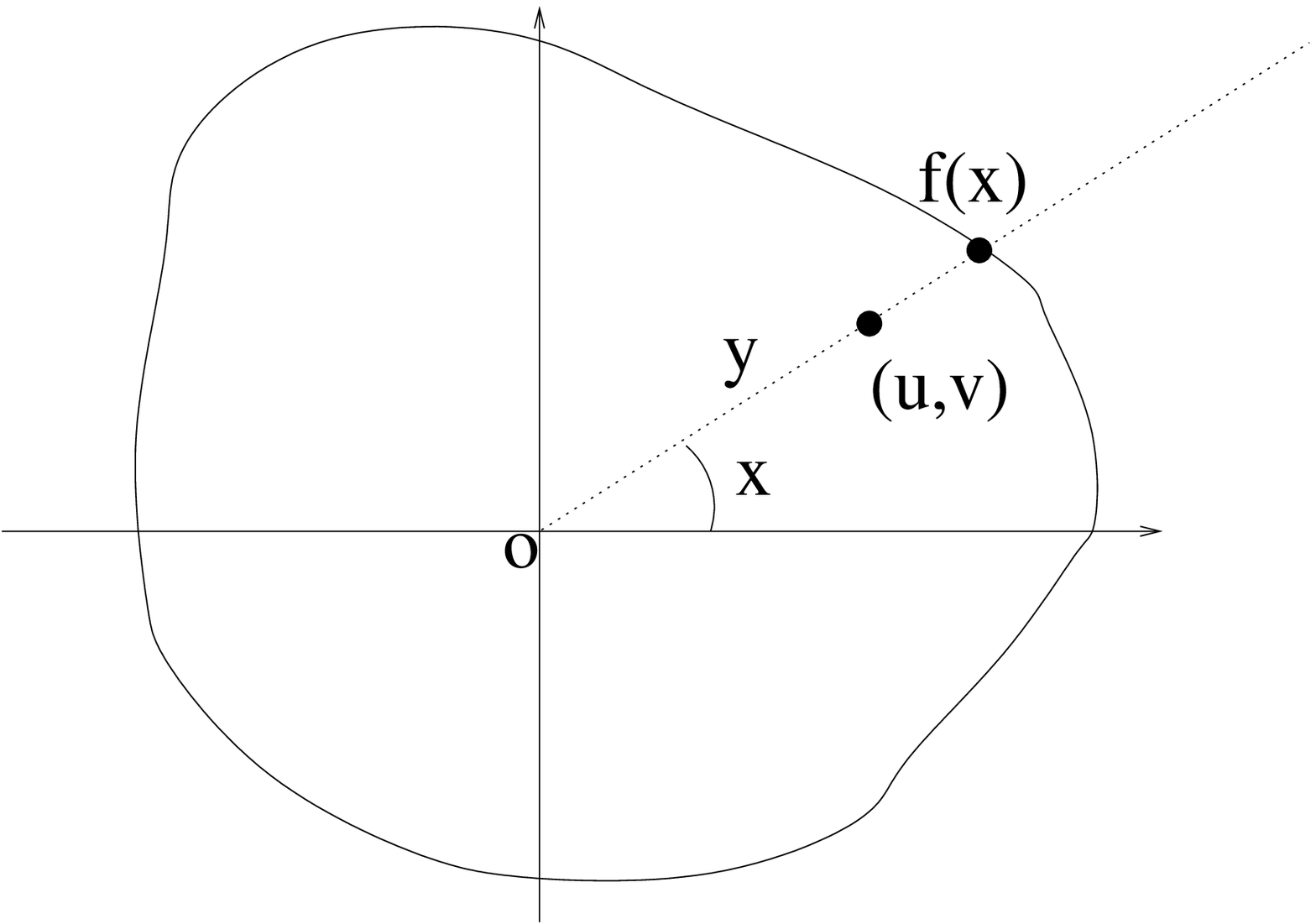,width=0.90\textwidth}}
\end{minipage}
\begin{minipage}{0.45\textwidth}
\centerline{\epsfig{figure=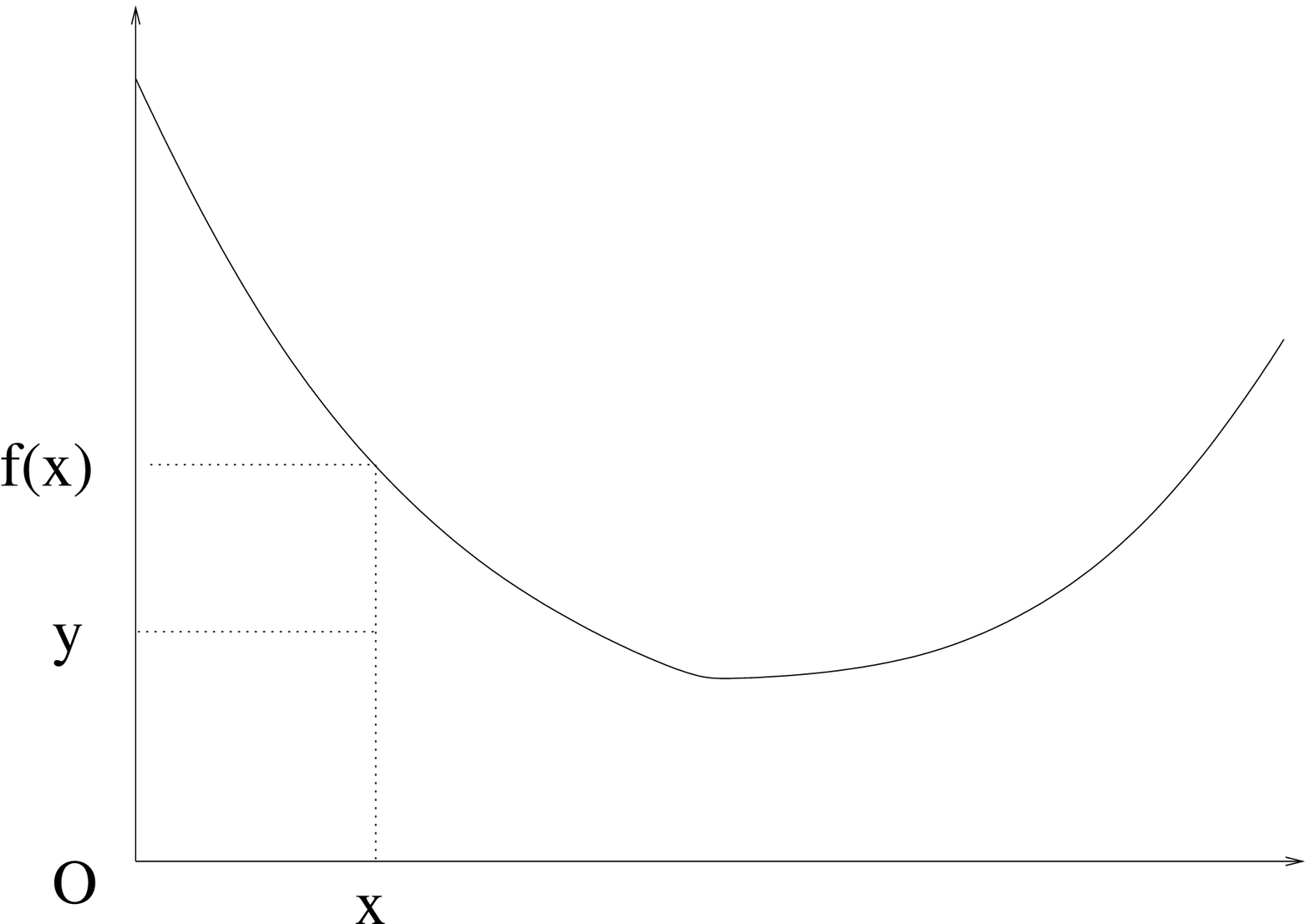,width=0.90\textwidth}}
\end{minipage}
\caption{Two different parametrizations of the point process support. 
Left: $S^{\mathrm{pol}}$ is described with polar coordinates, 
right: $S$ is described in Cartesian coordinates.   }
\label{figdessin}
\end{figure}

\clearpage

\begin{figure}[p]
\centerline{\epsfig{figure=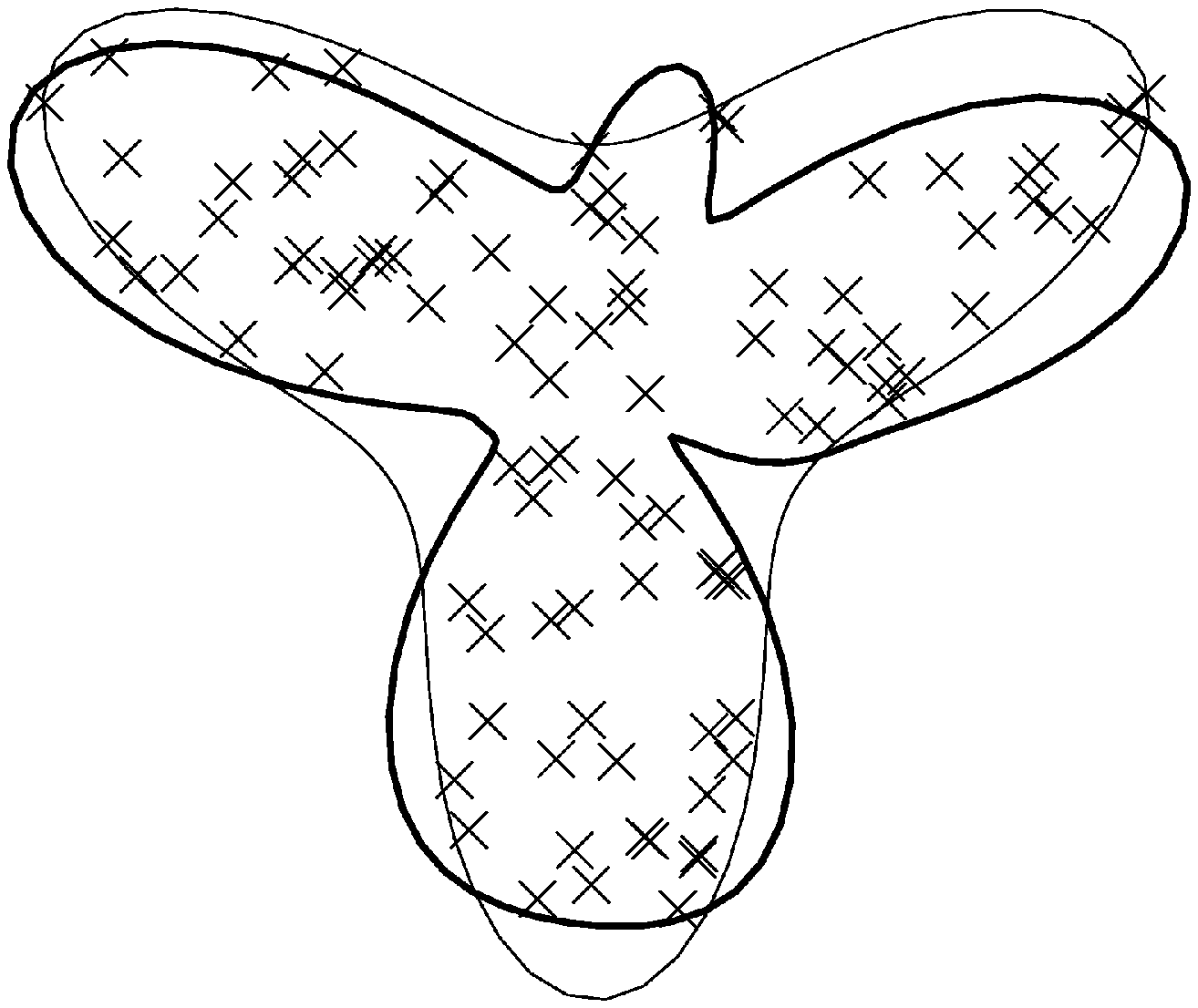,width=0.80\textwidth,angle=270}}
\caption{Worst situation.  Thin line:~$f$, bold line:~$\hat f^{\mathrm{pol}}_n$.  }
\label{figsim2}
\end{figure}
\begin{figure}[p]
\centerline{\epsfig{figure=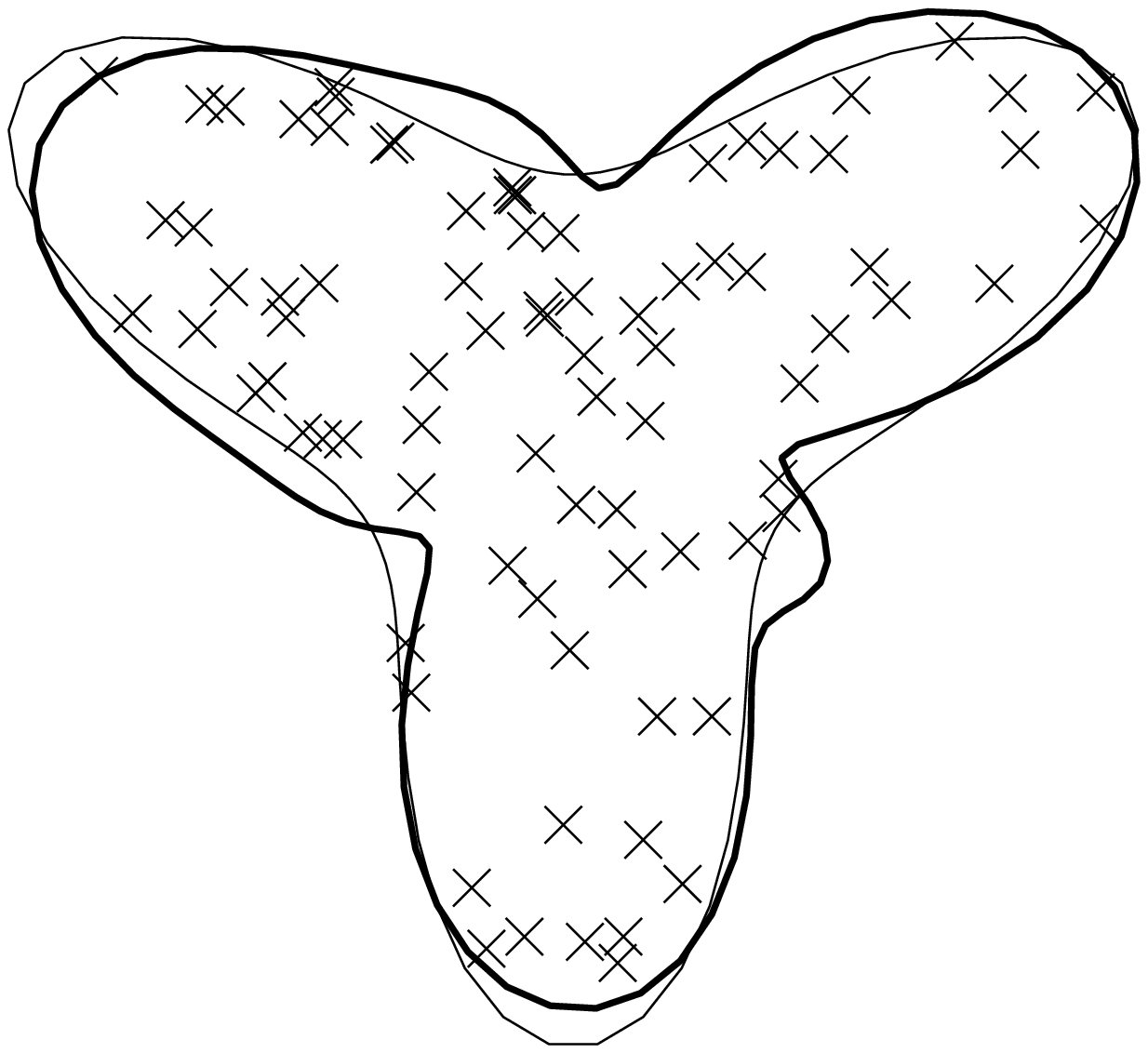,width=0.80\textwidth,angle=270}}
\caption{Best situation.  Thin line:~$f$, bold line:~$\hat f^{\mathrm{pol}}_n$.  }
\label{figsim}
\end{figure}

\end{document}